\newcommand{\Did}{\mathscr{D}}
\newcommand{\K}{\mathbb{Q}}
\newcommand{\Fgenh}{\mathbf F^\infty}
\newcommand{\Fgena}{\mathbf F}
\newcommand{\F}{\mathbf{F}}
\newcommand{\dwit}{\mathrm{d_{wit}}}
\DeclareMathOperator{\HS}{\mathsf{HS}}
\DeclareMathOperator{\wHS}{\mathsf{wHS}}
\DeclareMathOperator{\LM}{\mathsf{LM}}
\DeclareMathOperator{\MaxM}{\mathsf{MaxMinors}}
\DeclareMathOperator{\wdeg}{deg}
\DeclareMathOperator{\Icrit}{\mathbf{I}}
\DeclareMathOperator{\DEG}{\mathsf{DEG}}
\newcommand{\N}{\mathbb{N}}
\newcommand{\Sym}{\mathsf{Sym}}
\newcommand{\Z}{\mathbb{Z}}
\newcommand{\Q}{\mathbb{Q}}
\newcommand{\R}{\mathbb{R}}
\newcommand{\C}{\mathbb{C}}
\DeclareMathOperator{\jac}{\mathsf{jac}}
\DeclareMathOperator{\dreg}{d_{reg}}
\DeclareMathOperator{\dmax}{dmax}
\DeclareMathOperator{\GF}{\mathsf{GF}}
\newcommand{\frakg}{g^\infty}
\newcommand{\frakf}{f^\infty}
\newcommand{\frakq}{q^\infty}
\newcommand{\frakga}{g}
\newcommand{\frakfa}{f}
\newcommand{\frakqa}{q}
\newcommand{\frakghom}{g^h}
\newcommand{\frakfhom}{f^h}
\newcommand{\frakqhom}{q^h}
\newcommand{\Shom}{S^h}
\numberwithin{equation}{section}
\theoremstyle{plain}%
\newtheorem{theorem}{Theorem}
\numberwithin{theorem}{section}
\newtheorem{proposition}[theorem]{Proposition}
\newtheorem{lemma}[theorem]{Lemma}
\newtheorem{corollary}[theorem]{Corollary}
\newtheorem{definition}[theorem]{Definition}
\newtheorem{notation}[theorem]{Notation}
\date{}
\title{On the Complexity of Computing Critical Points with Gr\"obner Bases}
\author{Pierre-Jean Spaenlehauer\thanks{Project-team CARAMEL, INRIA/LORIA/CNRS, Nancy, France. Max Planck Institute for Mathematics, Bonn, Germany. PolSys Project-Team, INRIA/UPMC/LIP6/CNRS, Paris, France. pierre-jean.spaenlehauer@inria.fr}}
\begin{document}

\maketitle

\begin{abstract}
  Computing the critical points of a polynomial function
  $q\in\Q[X_1,\ldots,X_n]$ restricted to the vanishing locus
  $V\subset\R^n$ of polynomials $f_1,\ldots, f_p\in\Q[X_1,\ldots,
  X_n]$ is of first importance in several applications in optimization
  and in real algebraic geometry. These points are solutions of a
  highly structured system of multivariate polynomial equations
  involving maximal minors of a Jacobian matrix. We investigate the
  complexity of solving this problem by using Gr\"obner basis
  algorithms under genericity assumptions on the coefficients of the
  input polynomials. The main results refine known complexity bounds
  (which depend on the maximum
  $D=\max(\deg(f_1),\ldots,\deg(f_p),\deg(q))$) to bounds which depend
  on the list of degrees $(\deg(f_1),\ldots,\deg(f_p),\deg(q))$: we prove that
  the Gr\"obner basis computation can be performed in $\delta^{O(\log(A)/\log(G))}$
  arithmetic operations in $\Q$, where $\delta$ is the algebraic degree of the
  ideal vanishing on the critical points, and $A$ and $G$ are the arithmetic and geometric average of
  a multiset constructed from the sequence of degrees. As a
  by-product, we prove that solving such generic optimization problems
  with Gr\"obner bases requires at most $D^{O(n)}$ arithmetic
  operations in $\Q$, which meets the best known complexity bound for
  this problem. Finally, we illustrate these complexity results with
  experiments, giving evidence that these bounds are relevant for
  applications.
\end{abstract}

\section{Introduction}

\subsection{Problem statement and motivations}
Let $p<n$ be two positive integers, $q, f_1,\ldots,
f_p\in\Q[X_1,\dots,X_n]$ be polynomials with rational coefficients and
$\Icrit(q,\F)\subset \Q[X_1,\ldots, X_n]$ be the ideal generated by
the maximal minors of the Jacobian matrix $\jac(q,f_1,\ldots, f_p)$
and by the polynomials $\F=(f_1,\ldots, f_p)$:
$$\Icrit(q,\F)=\langle \MaxM(\jac(q,\F))\rangle+\langle\F\rangle.$$ 
Also, let $V\subset\C^n$ denote the variety associated to the
polynomial system $f_1,\ldots, f_p$. If $V$ is smooth, then the ideal
$\Icrit(q,\F)$ is the set of polynomials vanishing on the critical
points of the function $q$ restricted to $V$.  Computing these
critical points is of first importance in a wide range of applications
in optimization and in geometry, since the real local extrema of $q$ under
the constraints $f_1=\dots=f_p=0$ are reached at such points.

Our strategy is to compute these critical points by a classical
solving strategy with Gr\"obner bases algorithms (Buchberger
\cite{Buc65}/$F_4$ \cite{Fau99}/$F_5$ \cite{Fau02} and FGLM
\cite{FauGiaLazMor93}): we compute a parametrization of the critical
points by the roots of a univariate polynomial via a lexicographical
Gr\"obner basis of $\Icrit(q,\F)$ \cite{Rou99}. Once such a
representation is obtained, numerical algorithms can provide certified
approximations of the critical points from the roots of the univariate
polynomial.

It has been noticed that Gr\"obner bases algorithms perform well in
practice on these families of polynomial systems and critical point
computations are intensively used in the two software {\tt
  Hexagon}\footnote{written by Aur\'elien Greuet,
  http://www.lifl.fr/\textasciitilde greuet/indexFR.html} (for global
optimization) and {\tt RAGlib}\footnote{written by Mohab Safey El Din, http://www-polsys.lip6.fr/\textasciitilde safey/RAGLib/} (for computational real algebraic geometry). Both of
these software rely on the implementation of the $F_5$ algorithm
\cite{Fau02} in the {\tt FGb} library\footnote{written by Jean-Charles
  Faug\`ere, http://www-polsys.lip6.fr/\textasciitilde
  jcf/Software/FGb/}. It is not the aim of this paper to propose new
algorithms: the goal is to explain the efficiency of this approach by
providing complexity bounds that reflect the experimental behavior of
these Gr\"obner bases computations.
 
\subsection{Main results}

Let $\delta=\DEG(\Icrit(q,\F))$ denote the degree of the ideal
$\Icrit(q,\F)$ vanishing on the complex critical points. 
Classical
results on the algebraic degree of polynomial optimization show that
$\delta$ is in general exponential in $n$ and in $p$. An important
exception is the case of quadratic programming: in that case, $\delta$
is polynomial in the number of variables $n$ (but still exponential in
the codimension $p$) and this phenomenon is related to the existence
of algorithms running in time polynomial in $n$ in this special case
\cite{barvinok1993,GriPas05,FauSafSpa12}.

The value $\delta$ is an important indicator of the algebraic
complexity of the problem: under genericity assumptions on the input
polynomials, the ideal $\Icrit(q,\F)$ is radical (all the critical
points have multiplicity $1$) and hence $\delta$ equals the number of
complex critical points. Even though the number of real critical
points is usually much lower than $\delta$, in general their
coordinates lie in a field extension of $\Q$ of degree
$\delta$. Consequently, the complexity of Gr\"obner bases algorithms
is lower bounded by this value and we wish to express all upper
complexity bounds in terms of this lower bound.  In this setting, the
main result of this paper is

\begin{theorem}\label{thm:thm1intro}
Let $q,f_1,\ldots, f_p\in\Q[X_1,\ldots,X_n]$ be polynomials of degrees $d_0,d_1,\ldots, d_p\in\N$.
Let $A$ (resp. $G$) be the arithmetic (resp. geometric) average of the multiset
$$\{d_1,\ldots, d_p,\underbrace{\underset{0\leq i\leq p}\max\{d_i-1\},\dots,\underset{0\leq i\leq p}\max\{d_i-1\}}_{n-p}\}.$$
Then, under genericity assumptions on the coefficients of $q,f_1,\ldots,f_p$, a lexicographical Gr\"obner basis of $\Icrit(q,\F)$ can be computed within
$$\delta^{O\left(\log(A)/\log(G)\right)}\text{ arithmetic operations in $\Q$}.$$
\end{theorem}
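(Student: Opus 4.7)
The plan is to follow the classical two-step strategy for solving zero-dimensional ideals via Gröbner bases: first compute a grevlex Gröbner basis of $\Icrit(q,\F)$ with the $F_5$ algorithm, then apply FGLM to convert it to a lexicographical basis. The complexity of the second phase is $O(n\delta^3)$, which is absorbed into $\delta^{O(\log A/\log G)}$ since $A\geq G$ implies $\log A/\log G \geq 1$. The task therefore reduces to controlling the first phase, whose complexity is bounded by $O\!\left(\binom{n+\dreg}{n}^\omega\right)$, where $\dreg$ denotes the maximal degree reached during the $F_5$ computation and $\omega$ is the linear algebra exponent.

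Two structural ingredients on the critical point system (which I would invoke from the preceding sections of the paper) are required: (i) a bound of the form $\dreg \leq 1 + \sum_{e\in M}(e-1)$ on the degree of regularity, where $M$ is the multiset in the statement; and (ii) the genericity-based formula $\delta = \prod_{e\in M} e$ for the algebraic degree. Both statements should follow from the fact that, under the assumed genericity, $\Icrit(q,\F)$ has the same Hilbert series as an ideal generated by a regular sequence whose degrees are exactly the entries of $M$. A natural path is to pass through the Lagrange-type (bilinear) reformulation of the critical point system: introducing auxiliary variables $u_0,\ldots,u_p$, the condition that a row of the Jacobian lies in the span of the others becomes $u_0\nabla q + \sum u_i\nabla f_i=0$, producing a bihomogeneous system whose $X$-degrees match $M$ and whose bihomogeneous Bézout number is $\prod_{e\in M}e$.

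Granting (i) and (ii), the final estimate is a direct AM-GM calculation. The multiset $M$ has $n$ elements, so $\sum_{e\in M} e = nA$ and $\prod_{e\in M} e = G^n$; hence $\dreg \leq nA$ and $\log\delta = n\log G$, i.e.\ $n = \log\delta/\log G$. Applying the elementary estimate $\binom{N+n}{n}\leq (e(N+n)/n)^n$ with $N=\dreg$ yields
$$\binom{n+\dreg}{n}^\omega \;\leq\; \bigl(e(A+1)\bigr)^{\omega n} \;=\; \exp\bigl(O(n\log A)\bigr) \;=\; \exp\!\left(O\!\left(\tfrac{\log\delta \cdot \log A}{\log G}\right)\right) \;=\; \delta^{O(\log A/\log G)},$$
which combined with the FGLM bound gives the announced complexity.

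The principal obstacle is establishing (i): producing a sharp bound on $\dreg$ requires a careful Hilbert series analysis exploiting the combinatorial structure of $M$, in particular the $n-p$ copies of $\max_i\{d_i-1\}$ reflecting the degrees of the maximal minors of the Jacobian matrix. The degree formula (ii) is intersection-theoretic and more classical, and the final AM-GM manipulation is purely computational; the depth of the theorem lies in showing that a generic instance of such a determinantal-plus-polynomial system admits the same asymptotic Hilbert-series behavior as the regular sequence encoded by~$M$.
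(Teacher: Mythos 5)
Your high-level architecture --- grevlex via Macaulay matrices then FGLM, the binomial estimate, and an inequality relating $n$ to $\log\delta/\log G$ --- is the same as the paper's proof of Theorem~\ref{thm:complpfixed}. But the two structural facts (i) and (ii) that you treat as given are both false, and the route you sketch for proving them would not establish them. The underlying heuristic, that $\Icrit(q,\F)$ has the Hilbert series of a regular sequence with degree multiset $M$, is incorrect. Take $n=4$, $p=1$, $d_0=3$, $d_1=2$, so $M=\{2,2,2,2\}$ and $A=G=2$. Your claim (ii) would give $\delta=\prod_{e\in M}e=16$, whereas the Nie--Ranestad formula used in the paper gives $\delta=d_1\sum_{i_0+i_1=3}(d_0-1)^{i_0}(d_1-1)^{i_1}=2(8+4+2+1)=30$. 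Your claim (i) would give $\dreg\leq 1+\sum_{e\in M}(e-1)=5$, whereas Corollary~\ref{coro:dreg} gives $\dreg=(n-p-1)\max\{d_i-1\}-n-p+d_0+2\sum_{1\le i\le p}d_i=6$. The determinantal part of $\Icrit$ is resolved by the Eagon--Northcott complex, whose length and twists differ from the Koszul complex of a regular sequence, and the full symmetric-sum $\sum_{i_0+\dots+i_p=n-p}\prod_j(d_j-1)^{i_j}$ appearing in the degree strictly dominates the single term $\max\{d_i-1\}^{n-p}$ that equality in (ii) would require. The Lagrange/bihomogeneous reformulation you allude to does not produce a regular sequence with degrees $M$, so it would not prove (i) or (ii) either.

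Both errors happen to point in a direction that leaves the final estimate salvageable: the true $\delta$ is \emph{larger} than $G^n$, and in the concluding AM--GM step you only need $n\leq\log\delta/\log G$, so the inequality $\delta\geq G^n$ (which does hold, since the Nie--Ranestad sum contains the term $\max\{d_i-1\}^{n-p}$) suffices in place of your asserted equality $\log\delta=n\log G$. Likewise the true $\dreg$ exceeds your bound but is still $O(nA)$; the paper works with $\dwit+n<2An$. After substituting the inequality for the equality and the correct $\dreg$ bound from Corollary~\ref{coro:dreg} for yours, the closing computation goes through as you wrote it. As it stands, though, the proposal rests on intermediate statements that the preceding sections of the paper do not establish --- they establish the different, strictly larger quantities above --- so those two lemmas need to be replaced, not merely cited.
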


In particular, by rephrasing this theorem and by bounding above $\log(A)/\log(G)$, we obtain the following complexity estimates in terms of the input size:

\begin{corollary}
Let $D=\max\{\deg(f_1),\ldots,\deg(f_p),\deg(q)\}$ denote the maximum of the degrees of the input polynomials.
Under genericity assumptions on the coefficients of the input system, the complexity of computing a lexicographical Gr\"obner basis of $\Icrit(q,\F)$ by classical Gr\"obner bases algorithms requires at most
$$D^{O\left(n\right)}\text{ arithmetic operations in $\Q$}.$$
\end{corollary}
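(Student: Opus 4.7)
The plan is to reduce the corollary directly to Theorem~\ref{thm:thm1intro} by estimating each of the three quantities $\delta$, $A$, and $G$ in terms of $n$ and $D$. First, since the multiset from the theorem has cardinality $p+(n-p)=n$ and every one of its entries is bounded above by $D$, the arithmetic mean satisfies $A\leq D$, so $\log A\leq \log D$.

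The key step is the Bezout-type bound
\[
\delta \;\leq\; \Big(\prod_{i=1}^{p} d_i\Big)\cdot \Big(\max_{0\leq i\leq p}\{d_i-1\}\Big)^{n-p} \;=\; G^{n}.
\]
This is the natural complete-intersection estimate attached to the critical-point system: under the genericity hypothesis one can describe $\Icrit(q,\F)$ by a square system whose degrees are precisely the multiset used to define $G$ (for instance by passing to the Lagrange multiplier formulation in $\proj^{p}\times\C^{n}$, where the Lagrange equations have bi-degree $(1,\max_{0\leq i\leq p}\{d_i-1\})$ and the $f_i$'s contribute the degrees $d_1,\dots,d_p$). Taking logarithms yields $\log\delta\leq n\log G$.

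Combining these two estimates, the exponent in Theorem~\ref{thm:thm1intro} satisfies
\[
\frac{\log A}{\log G}\cdot \log \delta \;\leq\; \frac{\log A}{\log G}\cdot n\log G \;=\; n\log A \;\leq\; n\log D,
\]
so that $\delta^{O(\log A/\log G)}=D^{O(n)}$, which is exactly the corollary.

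The hardest part will be justifying the Bezout-type bound $\delta\leq G^{n}$; once that is in hand, only a one-line manipulation of logarithms remains. The degenerate boundary case $G=1$ (which forces all $d_i=1$ and so $D=1$, reducing the problem to linear algebra) should be handled separately to avoid dividing by $\log G=0$ in the asymptotic estimate.
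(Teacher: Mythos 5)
Your reduction to Theorem~\ref{thm:thm1intro} hinges on the inequality $\delta\leq G^n$, but this inequality is false — it is exactly the \emph{opposite} of what holds. The paper itself uses $\delta\geq G^n$ in the proof of Theorem~\ref{thm:complpfixed} (``Finally, using the fact that $\delta\geq G^n$\ldots''), and this lower bound is immediate from the Nie--Ranestad formula
\[
\delta=\Bigl(\prod_{1\leq i\leq p} d_i\Bigr)\sum_{i_0+\dots+i_p=n-p}(d_0-1)^{i_0}\cdots(d_p-1)^{i_p},
\]
since one of the (nonnegative) summands is already $\max\{d_i-1\}^{n-p}$. A concrete counterexample to your inequality is $n=3$, $p=1$, $d_0=3$, $d_1=2$ (the paper's own running example): there $\delta=14$ while $G^n=8$. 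Your appeal to the Lagrange-multiplier modeling also does not give $G^n$: the multihomogeneous B\'ezout count for that square bi-homogeneous system is $\binom{n}{p}\,G^n$, not $G^n$; the binomial factor is lost in your argument.

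With the correct upper bound $\delta\leq\binom{n}{p}G^n$ your final manipulation no longer closes in one line. One then gets
\[
\frac{\log A}{\log G}\log\delta\;\leq\; n\log A \;+\; \frac{\log A\,\log\binom{n}{p}}{\log G},
\]
and to absorb the second term into $O(n\log D)$ you need a lower bound on $\log G$ (for instance $\log G\geq\log 2$ when $D\geq 3$, and a separate estimate $\log A/\log G\leq 1/\log 2$ when $D=2$). This can be pushed through, but it is a genuinely extra step that your writeup does not supply. The paper avoids this issue entirely by \emph{not} routing through Theorem~\ref{thm:thm1intro}: in Corollary~\ref{coro:complDbounded} it bounds $\dwit$ directly in terms of $D$, repeats the Macaulay-matrix estimate to obtain a complexity $\delta^{O(1)}\cdot D^{O(n)}$, and then finishes with $\delta\leq D^p(D-1)^{n-p}\binom np\leq(2D)^n=D^{O(n)}$. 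That route sidesteps any delicate lower bound on $\log G$.

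One smaller point: the degenerate case $G=1$ you flag does not arise here. Under the paper's standing assumptions ($p\geq 1$ and $d_i\geq 2$ for $i\geq 1$), the multiset defining $G$ contains at least one entry $\geq 2$ and no entry $<1$, so $G\geq 2^{p/n}>1$.
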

The bound $D^{O\left(n\right)}$ meets the best known bound for the exact computation of such critical points \cite[Thm.~8]{SafTre06}, \cite[Sec.~10.3]{DBLP:journals/corr/DinS13}. 

\subsection{Roadmap of the proof}
In order to compute a lexicographical Gr\"obner basis of $\Icrit(q,\F)$, we use a
classical approach: first, we compute a \emph{graded reverse
  lexicographical} (\emph{grevlex} for short) Gr\"obner basis of $\Icrit(q,\F)$, then we use the FGLM algorithm to convert this
Gr\"obner basis into a \emph{lexicographical} (\emph{lex}) Gr\"obner basis.
The complexity of FGLM is well-known
\cite[Prop.~4.1]{FauGiaLazMor93} and bounded above by
$\delta^{O(1)}$. Consequently, we focus on the complexity
of the first step of the solving process, namely the computation of the grevlex Gr\"obner basis.

Let $f_1,\ldots,f_p,m_1,\ldots, m_{\binom{n}{p+1}}$ be the generators
of $\Icrit(q,\F)$, \emph{i.e.} the input polynomials and the maximal
minors of the Jacobian matrix. It is known that a grevlex Gr\"obner basis can be obtained by computing the row echelon form of a matrix (the \emph{Macaulay matrix}), which is parametrized by a degree $d\in\N$. Its row span is the $\Q$-vector space
$$T_d=\left\{\sum_{i=1}^p f_i \alpha_i +\sum_{j=1}^{\binom{n}{p+1}} m_j \beta_j \mid \alpha_i,\beta_j\in
\Q[X_1,\ldots, X_n], \deg(f_i \alpha_i)\leq d,\deg(m_j \beta_j)\leq
d\right\}.$$

The minimal value of $d$ such that a grevlex Gr\"obner basis of
$\Icrit(q,\F)$ is included in $T_d$ is called the \emph{witness
  degree} $\dwit$ of the system (see \emph{e.g.} \cite[Sec.~2.2]{BarFauSalSpa11}).  A crucial step in the
proof of the main complexity result is to estimate $\dwit$: under
genericity assumptions on the coefficients of $q,f_1,\ldots, f_p$
(Corollary \ref{coro:dreg}), we show the inequality
$$\dwit\leq (n-p-1) \max_{0\leq i\leq p}\{d_i-1\} -n-p+d_0 + 2\sum_{1\leq i\leq p} d_i.$$

To obtain this inequality, we first
consider the case where all polynomials $q,f_1,\ldots, f_p$ are
homogeneous with generic coefficients. In that case, the witness
degree equals the \emph{degree of regularity}, \emph{i.e.} the
smallest degree $d\in \N$ where the Hilbert function of the graded ring
$\Q[X_1,\ldots, X_n]/\Icrit(q,\F)$ becomes zero.

Next, we use techniques from commutative algebra to derive an explicit
formula for this Hilbert function.  This formula is obtained by
isolating the determinantal part of the ideal (Corollary
\ref{coro:HSImixed}) and by using the Eagon-Northcott complex
\cite{eagon1962ideals}\cite[Appendix A2H]{Eis01} to analyze this
determinantal component (Proposition \ref{prop:HSDmixed}). In fact,
under genericity assumptions on the coefficients of $(q,\F)$, the
Eagon-Northcott complex associated to the Jacobian matrix $\jac(q,\F)$
provides a \emph{graded free resolution} of the ideal generated by its
maximal minors, from which the Hilbert series can be extracted.

The inequality on the witness degree allows us to bound the dimension of the $\Q$-vector space
$T_{\dwit}$: its dimension is at most $\binom{(n-p-1) \max_{0\leq i\leq p}\{d_i-1\}-p+d_0 + 2\sum_{1\leq i\leq p} d_i}{n}$. A grevlex
Gr\"obner basis can be obtained by performing linear algebra in $S_{\dwit}$: this can be done within
$$O\left(\left(p+\binom{n}{p+1}\right)\binom{n+\dwit}{n}^\omega\right)$$
operations in $\Q$, where $\omega$ is a feasible exponent for the
matrix multiplication ($\omega\leq 2.373$ with Williams' algorithm
\cite{Vas11}). Rewriting this complexity bound in terms of the input degrees yields the claimed complexity bounds.

\subsection{Related works}
The results in this paper generalize the main results
in~\cite{FauSafSpa12} which were restricted to the special case
$q(X_1,\ldots,X_n)=X_1$ (\emph{i.e.} $d_0=1$) and
$d_1=\dots=d_p=D\in\N$. The complexity analysis was simpler in that
case: assuming that the degrees of the input polynomials are equal
simplifies the combinatorial and algebraic structure of the ideal
vanishing on the critical points. In this paper, we propose other
algebraic tools in order to take into account the combinatorial
structure induced by the mixed grading due to the different degrees of
general input polynomials. In particular, ideals generated by maximal
minors play an important role. When the grading is not uniform,
general formulas for the Hilbert series and for the Castelnuovo
regularity of such ideals are derived in \cite{budur2004hilbert}.

\smallskip

A classical problem in optimization is to compute a minimizer of a polynomial program of the form
\begin{equation}
  \left\{\begin{array}{l}
\text{Compute }\underset{(X_1,\ldots, X_n)\in\R}{\min} q(X_1,\ldots, X_n)\text{ under the constraints}\\
f_1(X_1,\ldots, X_n)=\dots=f_p(X_1,\ldots, X_n)=0.
\end{array}\right.
\label{eq:optineq}
\end{equation}
Inequalities can also be added to the set of constraints but this does
not change the algebraic degree of the problem if the minimizer lies
in the interior of the feasible set. Such a minimizer is a critical point of $q$
restricted to the variety associated to $f_1, \ldots, f_p$. The
explicit formula for the algebraic degree of this problem is given in
\cite[Theorem 2.2]{NieRan09} under genericity assumptions on the input
polynomials:
$$\delta=\displaystyle\left(\prod_{1\leq i\leq p} d_i\right)
\sum_{i_0+\dots+i_p=n-p} (d_0-1)^{i_0} \dots (d_p-1)^{i_p}.$$

Another area where such optimization problems appear frequently is computational real algebraic geometry.  For instance, the
\emph{critical point method} is a general
algorithmic framework for the study of topological properties of real
algebraic varieties. The cornerstone of these methods is the
computation of critical points of projections of varieties on linear
subspaces and have led to efficient algorithms with optimal or
near-optimal complexity for solving problems in real algebraic
geometry
\cite{GriVor88,Can88,BasPolRoy98,GriPas05,barvinok1993}. \emph{Polar
  varieties} describe the geometry of these critical loci and have
also given rise to large families of algorithms
\cite{trang1981varietes,BGHM1,BGHM2,BGHM3,BGHM4,SafSch03,BanGiuHeiPar05,BanGiuHeiSafSch10}. In
particular, \cite{BanGiuHeiMoh13} gives complexity bounds that are
polynomial in a geometrically defined quantity for solving these
polynomial optimization problems with geometric resolution techniques
\cite{GiuLecSal01}, leading to a complexity bound $(n D)^{O(n)}$ in the worst case.

The critical points of a function under polynomial constraints can
also be described as the set of solutions of a \emph{bi-homogeneous
  system} by using \emph{Lagrange multipliers}. This
connection between determinantal systems and bi-homogeneous systems
appears quite frequently: for instance in the Room-Kempf
desingularisation of determinantal varieties \cite{room1938geometry}
(see also \cite{bank2009bipolar}), or in the Kipnis-Shamir modeling of
the MinRank problem \cite{KipSha99,FauLevPer08,FauSafSpa10a}. This
modeling was used to obtain complexity bounds $D^{O(n)}$ by using the
geometric resolution algorithm \cite[Thm.~8]{SafTre06}. In the context of generalized Lagrange systems, similar complexity bounds are proved in \cite[Sec.~10.3]{DBLP:journals/corr/DinS13}.

Other applications of critical point computations with algebraic
methods appear in Statistics and in Biology
\cite{hosten2005solving,catanese2006maximum,pachter2005algebraic}. In
particular, computing \emph{Maximum Likelihood Estimates} is an
important routine in algebraic statistics which involves computing the
critical points of a monomial function restricted to an algebraic
variety. Another related setting is the study of the critical points
of the Euclidean distance function on an algebraic variety; algebraic
properties of these critical points are investigated in
\cite{DraHorOttStuTho13}.

\subsection{Organization of the paper}
The main notations used throughout the paper are introduced in Section \ref{sec:prelim}. Known results on the algebraic structure of the ideal $\Icrit(q,\F)$ are also recalled. The Eagon-Northcott complex is described in Section~\ref{sec:eagonnorthcott}, and formulas for the Hilbert series and for the degree of regularity are obtained in Section \ref{sec:HScritpointsmixed}; finally, the main complexity results are given in Section \ref{sec:critpointsmixedcompl} and are illustrated by experimental results in Section \ref{sec:expe}.

\subsection{Acknowledgements}
This is part of the author's Ph.D. thesis, written under the
supervision of Jean-Charles Faug\`ere and Mohab Safey El Din in the
PolSys project-team (INRIA/UPMC/LIP6, Paris, France). The author is
greatly indebted to Jean-Charles Faug\`ere, Mohab Safey El Din and
Bernd Sturmfels for their encouragements and for several helpful
discussions and suggestions. The author is grateful to Elisa Gorla for
pointing out important references and to an anonymous referee for useful
suggestions.

\section{Notations and preliminaries}
\label{sec:prelim}
\subsection{Notations}\label{sec:notations}
Throughout this paper, $p,n\in\N$ are two integers s.t. $p<n$, and $(d_0,\ldots, d_p)\in\N^{p+1}$ is a sequence of degrees such that
$d_0\geq 1$, $d_1,\ldots, d_p\geq 2$. Assuming that all the
constraints in the optimization problem are at least quadratic does not
lose any generality: linear constraints can be removed by substituting
one variable by a linear polynomial in the other equations. We let $X$ (resp. $U$) denote a
set of variables $\{X_1,\ldots, X_n\}$ (resp. $\{U_{0,1},\ldots,
U_{p,n}\}$) of cardinality $n$ (resp. $(p+1)n$).  We consider the
following grading of the polynomial ring $\Q[U,X]$:
$$\begin{cases}
  \deg(X_i)=1\text{ for all }i\in\{1,\ldots,n\};\\
  \deg(U_{i,j})=d_i-1 \text{ for all }i\in\{0,\ldots,p\},
  j\in\{1,\ldots,n\}.
\end{cases}$$ 

For $d\in\N$, we let $\Q[X]_d$ denote the $\Q$ vector space of homogeneous polynomials of degree $d$ in $\Q[X]$.  For any polynomial $\frakfa$ in $\Q[X]$, its
homogeneous part of highest degree is denoted by
$\frakf\in\Q[X]_{\deg(\frakfa)}$.  Also, for $i\in\{0,\ldots,p\},
j\in\{1,\ldots,n\}$, we define the polynomial $\frakg_{i
  n+j}\in\Q[U,X]$ as
$$\frakg_{i n+j}=
\begin{cases}
  U_{i,j}-\frac{\partial \frakf_i}{\partial X_j}\text{ if $i\in\{1,\ldots,p\}$},\\
  U_{i,j}-\frac{\partial \frakq}{\partial X_j}\text{ if $i=0$}.
\end{cases}
$$
For $i\in\{1,\ldots,p\}$, we define $\frakg_{(p+1)n+i}\in\Q[U,X]$ to
be equal to $\frakf_i\in\Q[X]$.
The determinantal ideal generated by all the maximal minors of the matrix
$$U=
\begin{bmatrix}
  U_{0,1}&\dots&U_{0,n}\\
  \vdots&\vdots&\vdots\\
  U_{p,1}&\dots&U_{p,n}
\end{bmatrix}$$ is denoted by $\Did$.  For $\F\in \Q[X]^{p}$ and an
objective function $q\in\Q[X]$ we consider the ideal $\Icrit(q,\F)$
generated by $\langle f_1,\ldots, f_p\rangle$ and by the maximal
minors of the Jacobian matrix $\jac(q,\F)$
$$
\jac(q,\F)=\begin{bmatrix} \frac{\partial q}{\partial
    X_1}&\dots&\frac{\partial q}{\partial
    X_n}\\
  \frac{\partial f_1}{\partial X_1}&\dots&\frac{\partial f_1}{\partial X_n}\\
  \vdots&\vdots&\vdots\\
  \frac{\partial f_{p}}{\partial X_1}&\dots&\frac{\partial
    f_{p}}{\partial
    X_n}\\
\end{bmatrix},$$
\emph{i.e.} $\Icrit(q,\F)=\langle f_1,\ldots, f_p\rangle +
\langle\MaxM(\jac(q,\F)\rangle$. Therefore,
$$\Icrit(\frakqa,\Fgena)=\left(\Did+\langle\frakga_1,\ldots,\frakga_{(p+1)n+p}\rangle\right)\cap\Q[X].$$

For a graded ideal $I$ of an $\N$-graded
$\Q$-algebra $R$, we call \emph{dimension} of $I$ the Krull
dimension of the quotient ring $R/I$, and we let $\wHS_{R/I}\in\Z[[t]]$ denote the
\emph{weighted Hilbert series} of $I$, defined by
$$\wHS_{R/I}(t)=\sum_{d\in\N} \dim_\Q\left(R_d/I_d\right)t^d,$$

where $R_d$ (resp. $I_d$) denotes the $\Q$-vector space of homogeneous
elements of degree $d$ in $R$ (resp. $I$). We shall use the notation $\HS_{R/I}(t)$ when $R$ is the polynomial algebra $\Q[X]$ (or $\Q[X]$) with the canonical grading $\deg(X_i)=1$ for all $i$.

If $I$ is an homogeneous ideal of dimension $0$ of a polynomial ring $\Q[X]$, its \emph{degree of regularity} $\dreg(I)$ is the smallest integer $d$ such that $\dim_\Q(\Q[X]_d/I_d)=0$. Equivalently, it equals $1$ plus the degree of the Hilbert series (which is a polynomial in the case of $0$-dimensional ideals).

\subsection{Genericity}
Throughout this paper, $q,f_1,\ldots, f_p\in \Q[X]$ are polynomials of respective degrees at most $d_0,\ldots, d_p$. 
As in \cite{NieRan09}, we say that a property holds for a \emph{generic system} $(q,f_1,\ldots, f_p)$ (resp. $(q^\infty,f_1^\infty,\ldots, f_p^\infty)\in\Q[X]_{d_0}\times\dots\times \Q[X]_{d_p}$) if this property holds for all $(q, f_1,\ldots, f_p)$ (resp. $(q^\infty,f_1^\infty,\ldots, f_p^\infty)$) in a dense Zariski open subset of the space of all polynomials of degrees at most $d_0,\ldots, d_p$ (resp. of the space of all homogeneous polynomials of degrees $d_0,\ldots, d_p$). Note that the variety associated to a generic system $f_1,\ldots, f_p$ (resp. $f_1^\infty,\ldots, f_p^\infty$) is a reduced smooth complete intersection.

\subsection{Complexity model and notations}
All complexity estimates count the number of operations $\{+,-,\times,\div\}$ in $\Q$. It is not the goal of this paper to estimate the bit complexity induced by the growth of the coefficients due to the arithmetic operations in $\Q$. All complexity counts are parametrized by $n$ and by the sequence of degrees $(d_0,\ldots, d_p)$ of length $p+1$. More precisely,  complexities in this model are partial functions $\N\times \mathbf c_0(\N)\rightarrow \N$, where $\mathbf c_0(\N)$ is the set of sequences $(d_i)_{i\in\N}$ with finite support. For two non-negative functions $C_1,C_2:\N\times \mathbf c_0(\N)\rightarrow \R_+$, we write $C_1=O(C_2)$ if there exists a constant $A\in \N$ such that $C_1(n,d_0,\ldots, d_p)\leq A\cdot C_2(n,d_0,\ldots, d_p)$. For three functions $C_1,C_2,C_3:\N\times \mathbf c_0(\N)\rightarrow \R_+$ taking values greater than $1$, the notation $C_1=C_2^{O(C_3)}$ means that
$$\frac{\log(C_1)}{\log(C_2)}=O(C_3).$$

\subsection{Algebraic structure of $\Icrit(\frakq,\Fgenh)$}

We recall in this section results from a previous paper
\cite{FauSafSpa12}, where we investigated the special case where all
the constraints shared the same degree: $d_1=\dots=d_p=D$ and $q=X_1$ is
the projection on the first coordinate. Some of their properties also
hold in the general case. We state them and recall their proofs in this section.
The next lemma shows that the set of complex critical points of a generic polynomial optimization problem is finite. It is stated for the homogeneous system $(q^\infty,\F^\infty)$ but the same statement for an inhomogeneous generic system $(q,\F)$ can be proved similarly (see \cite[Prop.~2.1]{NieRan09}).

\begin{lemma}\label{lem:dim0}
  If $(\frakq,\Fgenh)$ is a generic system, then the ideal $\Icrit(\frakq,\Fgenh)$ has dimension $0$.
\end{lemma}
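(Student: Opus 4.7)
My plan is to show that the affine variety $V(\Icrit(\frakq, \Fgenh)) \subset \C^n$ reduces to the origin for generic inputs; equivalently, since $\Icrit(\frakq, \Fgenh)$ is homogeneous, the associated projective variety in $\Pproj^{n-1}$ is empty. I introduce the parameter space $\mathcal P = \Q[X]_{d_0} \times \cdots \times \Q[X]_{d_p}$ and the incidence variety
$$Z = \bigl\{((\frakq, \Fgenh), [x]) \in \mathcal P \times \Pproj^{n-1} : \frakf_i(x) = 0 \text{ for all } i, \ \mathrm{rank}\,\jac(\frakq, \Fgenh)(x) \leq p\bigr\}.$$
The goal reduces to proving that the first projection $\pi_1: Z \to \mathcal P$ is not dominant: its closed image is then a proper subvariety of $\mathcal P$ whose complement is the required dense open set.

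The crux is a codimension count on the fibers of the second projection $\pi_2: Z \to \Pproj^{n-1}$. Fix $[x^\circ] \in \Pproj^{n-1}$ and bound the dimension of $\pi_2^{-1}([x^\circ]) \subset \mathcal P$. The $p$ linear conditions $\frakf_i(x^\circ) = 0$ contribute codimension $p$, and once they hold Euler's identity $\langle x^\circ, \nabla g(x^\circ)\rangle = (\deg g)\, g(x^\circ)$ forces each $\nabla\frakf_i(x^\circ)$ into the hyperplane $(x^\circ)^\perp \subset \C^n$. An explicit construction of preimages using monomials of the form $X_k^{d_i - 1} X_j$ (for a coordinate $x^\circ_k \neq 0$) shows that, for $d_i \geq 2$, the evaluation map $\{\frakf_i \in \Q[X]_{d_i}: \frakf_i(x^\circ) = 0\} \to (x^\circ)^\perp$ sending $\frakf_i \mapsto \nabla\frakf_i(x^\circ)$ is surjective; analogously $\frakq \mapsto \nabla\frakq(x^\circ)$ is surjective onto $\C^n$ for $d_0 \geq 1$. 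Inside the resulting matrix subspace, the locus $\mathrm{rank} \leq p$ has codimension $n - p$: generically the bottom $p$ rows are linearly independent in $(x^\circ)^\perp$ and the condition becomes $\nabla\frakq(x^\circ) \in \mathrm{span}(\nabla\frakf_1(x^\circ), \ldots, \nabla\frakf_p(x^\circ))$; the degenerate case in which those rows themselves drop rank contributes the same codimension by the classical determinantal formula. Combining these, $\pi_2^{-1}([x^\circ])$ has codimension at least $p + (n - p) = n$ in $\mathcal P$.

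Applying the fiber-dimension inequality to $\pi_2$ then yields $\dim Z \leq (n - 1) + (\dim\mathcal P - n) = \dim\mathcal P - 1$, so $\overline{\pi_1(Z)}$ is a proper Zariski-closed subset of $\mathcal P$. For $(\frakq, \Fgenh)$ in its complement the fiber of $\pi_1$ is empty, hence $V(\Icrit(\frakq, \Fgenh)) = \{0\} \subset \C^n$, and Hilbert's Nullstellensatz then gives $\langle X_1, \ldots, X_n\rangle^N \subseteq \Icrit(\frakq, \Fgenh)$ for some integer $N$, confirming that $\Icrit(\frakq, \Fgenh)$ has Krull dimension zero. The delicate point throughout is the codimension count for the rank condition: because the Euler relations constrain the bottom $p$ rows of the Jacobian into a hyperplane, one must check that the determinantal variety still cuts its expected codimension within this constrained matrix space, and this is precisely where the surjectivity of the evaluation maps and the short case split on the rank of those bottom rows are needed.
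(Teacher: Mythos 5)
Your proof is correct, but it takes a genuinely different route from the paper's. The paper first fixes $\F^\infty$ generic so that $V = V(\F^\infty)$ is a smooth complete intersection away from the origin (an appeal to Bertini), and then only varies $q^\infty$: it works with the incidence variety $\C[X]_{d_0}\times V$ and observes that over each nonzero smooth point of $V$, the Jacobian of $\F^\infty$ already has full rank $p$, so the set of admissible $q^\infty$ is a \emph{linear} subspace of codimension $n-p$; the dimension count is then immediate. You instead keep all of $\mathcal P = \prod_i \Q[X]_{d_i}$ as parameter space and project the incidence variety to $\Pproj^{n-1}$, bounding the fiber over each $[x^\circ]$ by hand: codimension $p$ from $\frakf_i(x^\circ)=0$, then $n-p$ more from the rank condition. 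Your argument does not presuppose smoothness of $V$ anywhere and treats $q^\infty$ and the $\frakf_i$ symmetrically, but this comes at the cost of having to verify that the determinantal locus cuts its expected codimension inside the \emph{constrained} matrix space $\C^n\times\bigl((x^\circ)^\perp\bigr)^p$, which forces the two-case stratification (bottom rows independent vs.\ not) and the surjectivity of the gradient-evaluation maps. Both codimension claims check out — the constrained rank-$\le p$ locus is the union of the graph-like stratum of codimension $n-p$ and the classical determinantal stratum of a $p\times(n-1)$ matrix with rank $\le p-1$, also codimension $(p-(p-1))(n-1-(p-1))=n-p$ — so your count $p+(n-p)=n$ and the conclusion $\dim Z\le \dim\mathcal P - 1$ are correct. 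In short: the paper's proof is shorter because it imports Bertini; yours is more elementary and self-contained, at the price of a more delicate fiber-dimension argument.
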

\begin{proof}
This proof is similar to that of \cite[Lemma~2.1]{DraHorOttStuTho13}.
Let $V\subset \C^n$ be the affine variety associated to $\Fgenh$. Since $\Fgenh$ is generic, $V$ is smooth at any nonzero point. We consider the correspondence variety
$$\mathcal E_{V,d_0}=\C[x_1,\ldots,x_n]_{d_0}\times V\subset \C^{\binom{n+d_0-1}{n}+n}.$$
The set $W=\{(q^\infty,\mathbf x)\in \mathcal E_{V,d_0}\mid {\rm
  Rank}(\jac(q^\infty(\mathbf x),\F^\infty(\mathbf x))\leq p\}$ is a proper
subvariety of $\mathcal E_{V,d_0}$ and $\Icrit(\frakq,\Fgenh)$ is the generic fiber of its projection on $\C[x_1,\ldots,x_n]_{d_0}$. Since $V$ is smooth at any nonzero point, for all $\mathbf x\in V\setminus\{0\}$,
the fiber $\{q^\infty\in \C[x_1,\ldots,x_n]_{d_0}\mid (q^\infty,\mathbf x)\in W\}$
is a linear subspace of codimension $n-p$ in
$\C[x_1,\ldots,x_n]_{d_0}$. Since $V$ has dimension $n-p$ and the fibers of the projection on $V$ have codimension $n-p$,  $W$ must have dimension $\dim_\C(\C[x_1,\ldots,x_n]_{d_0})$ and hence the generic fiber of its projection on $\C[x_1,\ldots,x_n]_{d_0}$
is finite. 
\end{proof}

Since the ideal $\Icrit(\frakq,\Fgenh)$ is homogeneous and has dimension $0$, its variety is the unique point $\{0\}\in\C^n$. However, the ideal $\Icrit(\frakq,\Fgenh)$ is not radical and an important indicator of the complexity of the Gr\"obner basis computation is its Hilbert series. The two next statements describe the relationship between the Hilbert series of $\Q[X]/\Icrit(\frakq,\Fgenh)$ and that of $\Q[U]/\Did$.

\begin{lemma}\label{lem:nondivgl}
  For $\ell\in\{1,\ldots,p+n(p+1)\}$, the polynomial $\frakg_\ell$ does not divide $0$ in the quotient ring $\Q[U,X]/\left(\Did+\langle  \frakg_1,\ldots, \frakg_{\ell-1}\rangle\right)$.
\end{lemma}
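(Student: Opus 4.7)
The plan is to prove the stronger statement that $\frakg_1,\ldots,\frakg_{p+n(p+1)}$ forms a regular sequence in the ring $R:=\Q[U,X]/\Did$, from which the lemma follows directly by the definition of a regular sequence. The argument combines Cohen-Macaulayness of the generic determinantal ring with a dimension count that invokes Lemma~\ref{lem:dim0}.

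First I would invoke the classical theorem of Hochster-Eagon (equivalently, the exactness of the Eagon-Northcott complex to be developed in Section~\ref{sec:eagonnorthcott}) to assert that $\Q[U]/\Did$ is a Cohen-Macaulay domain of codimension $n-p$, hence of Krull dimension $p(n+1)$. Since Cohen-Macaulayness is preserved under polynomial extensions, $R=(\Q[U]/\Did)[X]$ is a Cohen-Macaulay integral domain of Krull dimension $p(n+1)+n$. The key numerical observation is that this dimension coincides exactly with the length $p+n(p+1)$ of the candidate regular sequence.

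Next I would identify the total quotient. For $\ell=in+j\le(p+1)n$, the polynomial $\frakg_\ell=U_{i,j}-h_{i,j}(X)$ (with $h_{i,j}=\partial\frakf_i/\partial X_j$, or $\partial\frakq/\partial X_j$ when $i=0$) expresses $U_{i,j}$ as a polynomial in $X$, so these generate the kernel of the substitution morphism $\pi\colon\Q[U,X]\twoheadrightarrow\Q[X]$ sending $U_{i,j}\mapsto h_{i,j}(X)$ and $X_k\mapsto X_k$. Since $\pi$ sends the matrix $U$ to $\jac(\frakq,\Fgenh)$, we have $\pi(\Did)=\langle\MaxM(\jac(\frakq,\Fgenh))\rangle$, and together with the remaining polynomials $\frakg_{(p+1)n+i}=\frakf_i\in\Q[X]$ (already fixed by $\pi$), $\pi$ induces an isomorphism
$$R\big/\langle\frakg_1,\ldots,\frakg_{p+n(p+1)}\rangle\;\cong\;\Q[X]/\Icrit(\frakq,\Fgenh),$$
whose Krull dimension is zero by Lemma~\ref{lem:dim0} under the genericity hypothesis.

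Applying the dimension formula $\mathrm{ht}(I)+\dim(R/I)=\dim R$, which holds since $R$ is a finitely generated $\Q$-domain, the ideal $I=\langle\frakg_1,\ldots,\frakg_{p+n(p+1)}\rangle$ has height $\dim R=p+n(p+1)$, equal to the number of its generators. The standard criterion that in a Cohen-Macaulay ring a sequence of $r$ elements generating an ideal of height $r$ is a regular sequence (see e.g.~Matsumura, \emph{Commutative Ring Theory}, Thm.~17.4) then yields that $\frakg_1,\ldots,\frakg_{p+n(p+1)}$ is a regular sequence in $R$, which is exactly the content of the lemma. The main technical input is the Cohen-Macaulayness and codimension of the generic maximal-minor ideal $\Did$; once this is in place, everything else is routine dimension bookkeeping through the substitution isomorphism and Lemma~\ref{lem:dim0}.
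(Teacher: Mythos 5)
Your proof is correct and follows essentially the same route as the paper: Cohen-Macaulayness of $\Q[U,X]/\Did$, the dimension-zero conclusion from Lemma~\ref{lem:dim0} transported through the substitution isomorphism, and the standard Cohen-Macaulay criterion that $\dim R$ many elements cutting down to dimension zero form a regular sequence (the paper phrases this as Macaulay's unmixedness theorem, \cite[Corollary 18.14]{Eis95}; your citation of Matsumura Thm.~17.4 is the same fact). You simply spell out a few steps the paper leaves implicit, such as the explicit elimination morphism $\pi$ and the identity $p(n+1)+n = p+n(p+1)$.
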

\begin{proof}
   The quotient ring $\Q[U,X]/\Did$ is Cohen-Macaulay of dimension $p+n(p+1)$ \cite[Prop.~1.1]{BruVet88}, \cite[Coro.~2.8]{BruVet88} and $\Did+\left\langle  \frakg_1,\ldots, \frakg_{p+n(p+1)}\right\rangle$ has dimension $0$ in $\Q[U,X]$ by Lemma \ref{lem:dim0}. Consequently, $\langle  \frakg_1,\ldots, \frakg_{p+n(p+1)}\rangle$ is a regular sequence in $\Q[U,X]/\Did$ by Macaulay's unmixedness Theorem \cite[Corollary 18.14]{Eis95}.
\end{proof}

Homogeneous regular sequences provide relations between the Hilbert series of the associated ideals, leading to the following corollary:
\begin{corollary}\label{coro:HSImixed}
The Hilbert series of $\Q[X]/\Icrit(\frakq,\Fgenh)$ is related to that of $\Q[U]/\Did$ by
$$\HS_{\Q[X]/\Icrit(\frakq,\Fgenh)}(t)=\wHS_{\Q[U]/\Did}(t)\cdot\frac{(1-t^{d_{0}-1})^n\prod_{1\leq i\leq p} (1-t^{d_i})(1-t^{d_i-1})^n}{(1-t)^n}.$$
\end{corollary}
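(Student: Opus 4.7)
The plan is to identify $\Q[X]/\Icrit(\frakq,\Fgenh)$ with a graded quotient of $\Q[U,X]$ in which the determinantal piece $\Did$ (depending only on $U$) and the regular sequence $(\frakg_\ell)$ are separated, and then to multiply the appropriate Hilbert series.

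First, I would establish a graded isomorphism
\[
\Q[U,X]/\bigl(\Did+\langle \frakg_1,\ldots,\frakg_{(p+1)n+p}\rangle\bigr) \;\cong\; \Q[X]/\Icrit(\frakq,\Fgenh).
\]
Modulo the $n(p+1)$ relations of the form $\frakg_{in+j}=U_{i,j}-\partial(\cdot)/\partial X_j$, each $U_{i,j}$ is replaced by the corresponding partial derivative of $\frakq$ or $\frakf_i^\infty$, which has degree $d_i-1$ in the standard grading of $\Q[X]$; this matches $\deg(U_{i,j})=d_i-1$, so the substitution is graded. After this substitution the remaining $p$ relations $\frakg_{(p+1)n+i}=\frakf_i^\infty$ become the original constraints, and the minors of the symbolic matrix $U$ become exactly the maximal minors of $\jac(\frakq,\Fgenh)$. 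Hence the quotient computes $\Q[X]/(\langle \Fgenh\rangle+\langle \MaxM(\jac(\frakq,\Fgenh))\rangle)$ as graded rings, the equality being consistent with the statement $\Icrit(\frakqa,\Fgena)=(\Did+\langle\frakga_\ell\rangle)\cap\Q[X]$ already recorded in Section~\ref{sec:notations}.

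Second, Lemma \ref{lem:nondivgl} gives that $(\frakg_1,\ldots,\frakg_{(p+1)n+p})$ is a homogeneous regular sequence in the graded ring $\Q[U,X]/\Did$ for the weighted grading. The standard Hilbert-series formula for quotienting by a homogeneous regular sequence then yields
\[
\wHS_{\Q[U,X]/(\Did+\langle\frakg\rangle)}(t)=\wHS_{\Q[U,X]/\Did}(t)\cdot\prod_{\ell=1}^{(p+1)n+p}(1-t^{\deg(\frakg_\ell)}).
\]
Collecting the degrees --- $n$ polynomials of degree $d_0-1$ (the $\frakg_j$ for $i=0$), $n$ polynomials of degree $d_i-1$ for each $i\in\{1,\ldots,p\}$, and the $p$ polynomials $\frakg_{(p+1)n+i}=\frakf_i^\infty$ of degree $d_i$ --- gives the product $(1-t^{d_0-1})^n\prod_{1\le i\le p}(1-t^{d_i-1})^n(1-t^{d_i})$.

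Third, since $\Did\subset \Q[U]$ involves no $X$-variable, there is a graded isomorphism $\Q[U,X]/\Did\cong (\Q[U]/\Did)\otimes_\Q \Q[X]$, and the $X$-factor contributes the standard Hilbert series $1/(1-t)^n$. Hence
\[
\wHS_{\Q[U,X]/\Did}(t)=\frac{\wHS_{\Q[U]/\Did}(t)}{(1-t)^n},
\]
and substituting this into the product from the previous paragraph, together with the graded isomorphism from the first paragraph (which identifies the weighted Hilbert series on the left with $\HS_{\Q[X]/\Icrit(\frakq,\Fgenh)}$ on the right), yields the claimed identity.

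The main subtle point is the graded isomorphism of the first paragraph: one must check that the specific weighted grading $\deg(U_{i,j})=d_i-1$ on $\Q[U,X]$ makes every $\frakg_\ell$ homogeneous and makes the minors of $U$ reduce to the Jacobian minors without any degree shift. Everything else --- the regular-sequence Hilbert-series multiplication and the tensor-product factorization --- is routine once this compatibility is set up.
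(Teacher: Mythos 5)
Your proof is correct and follows essentially the same route as the paper: the graded isomorphism $\Q[U,X]/(\Did+\langle \frakg_\ell\rangle)\cong\Q[X]/\Icrit(\frakq,\Fgenh)$, the regular-sequence property from Lemma~\ref{lem:nondivgl} together with the multiplication of Hilbert series by $(1-t^{\deg \frakg_\ell})$, and the tensor-product factorization $\Q[U,X]/\Did\cong(\Q[U]/\Did)\otimes_\Q\Q[X]$. You simply present the three ingredients in a different order and spell out the grading check more explicitly (which, incidentally, avoids a small typo in the paper's proof, where the ambient ring of Lemma~\ref{lem:nondivgl} is misquoted as $\Q[X]/\langle\frakg_1,\ldots,\frakg_{\ell-1}\rangle$ instead of $\Q[U,X]/(\Did+\langle\frakg_1,\ldots,\frakg_{\ell-1}\rangle)$).
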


\begin{proof}
First, we use the fact that $\Q[U,X]/\Did \cong \Q[U]/\Did \otimes_\Q \Q[X]$ which yields
$$\wHS_{\Q[U,X]/\Did}(t)=\wHS_{\Q[U]/\Did}(t)\cdot\HS_{\Q[X]}(t)=\dfrac{\wHS_{\Q[U]/\Did}(t)}{(1-t)^n}.$$

We recall that, with the notations of Lemma \ref{lem:nondivgl},
$$\Icrit(\frakq,\Fgenh)=\left(\Did+\left\langle \frakg_1,\ldots, \frakg_{p+n(p+1)}\right\rangle\right)\cap\Q[X].$$

According to Lemma \ref{lem:nondivgl}, for each $2\leq\ell\leq p+n(p+1)$, $\frakg_\ell$ does not divide $0$ in $\Q[X]/\langle\frakg_1,\ldots,\frakg_{\ell-1}\rangle$. Adding to an ideal a polynomial of degree $d$ that does not divide zero in the quotient ring multiplies its Hilbert series by $(1-t^d)$.
To conclude the proof, we notice that 
$$\Q[U,X]/\left(\Did+\langle \frakg_1,\ldots, \frakg_{p+n(p+1)}\rangle\right)\cong \Q[X]/\Icrit(\frakq,\Fgenh)$$
via the map
$$
\left\{\begin{array}{rcl}
  X_i&\mapsto& X_i\\
  U_{i,j}&\mapsto&\dfrac{\partial f_i}{\partial X_j}
\end{array}\right.
$$
Consequently
$$\begin{array}{rcl}\HS_{\Q[X]/\Icrit(\frakq,\Fgenh)}(t)&=&\HS_{\Q[U,X]/\left(\Did+\langle \frakg_1,\ldots, \frakg_{p+n(p+1)}\rangle\right)}(t)\\
  &=&\displaystyle\wHS_{\Q[U,X]/\Did}(t) (1-t^{d_{0}-1})^n\prod_{1\leq i\leq p} (1-t^{d_i})(1-t^{d_i-1})^n\\
  &=&\displaystyle\wHS_{\Q[U]/\Did}(t)\cdot\dfrac{(1-t^{d_{0}-1})^n \prod_{1\leq i\leq p} (1-t^{d_i})(1-t^{d_i-1})^n}{(1-t)^n}.\end{array}$$
\end{proof}

\section{The Eagon-Northcott complex and the Hilbert series of weighted determinantal ideals}
\label{sec:mixedCritPoints}
The goal of this section is to obtain an
explicit formula for the degree of regularity of
$\Icrit(\frakq,\Fgenh)$, namely the smallest positive integer $\ell$
such that the coefficient of $t^\ell$ in the series expansion of
$\HS_{\Q[X]/\Icrit(\frakq,\Fgenh)}(t)$ is zero. This value bounds the witness degree of any system of generators of $\Icrit(q,\F)(t)$ and hence is an
indicator of the complexity of the Gr\"obner basis computation.

The bound on $\dwit$ shall be obtained via an
explicit formula for the Hilbert series of
$\Icrit(q^\infty,\F^\infty)$. 
The main principle is to isolate first the determinantal component
(generated by the maximal minors of the Jacobian matrix) and to analyze it
separately. We obtain in that case a determinantal ideal $\Did$, with a
grading induced by the degrees of the input polynomials.
A free resolution of this ideal
$\Did$ is given by the so-called \emph{Eagon-Northcott complex}
\cite{eagon1962ideals}. From this free resolution, we shall read off an
explicit formula for the Hilbert series and for the degree of
regularity of $\Icrit(q^\infty,\F^\infty)$.

\subsection[Eagon-Northcott complex]{Preliminaries on the Eagon-Northcott complex}
\label{sec:eagonnorthcott}

Free resolutions are classical tools in commutative algebra to
describe the relations existing between a given set of
polynomials. The main principles of these techniques go back to
Hilbert and his \emph{Syzygy Theorem} (see \emph{e.g.} \cite[Corollary
19.7]{Eis95} for a statement in the modern formalism).

An explicit description of a minimal free resolution of the ideal
generated by the maximal minors of a generic matrix is given by the
Eagon-Northcott complex \cite{eagon1962ideals}. We refer to
\cite[Appendix A2H]{Eis01} for a complete presentation.  After
describing the general construction of the complex, we shall detail an
example of Hilbert series computations.

Let $R=Q[U]$ be the polynomial ring. Following the notations in \cite[Appendix A2H]{Eis01}, we write
$F=R^f$ and $G=R^g$, where $f$ and $g$ are two integers such that $g<f$. For a $g\times f$ matrix whose
entries are in $R$, we let $\alpha:F\rightarrow G$ denote the
corresponding morphism of modules. Let $\otimes^i G$ be the $R$-module of tensors of order $i$:
$$\otimes^i G={\rm Span}_R\left\{g_1\otimes\dots\otimes g_i \mid g_1,\ldots, g_i\in G\right\}$$
and let $M_i$ (resp. $N_i$) be the submodule of $\otimes^i G$
generated by the elements $\{g_1\otimes\dots\otimes g_i -
g_{\sigma(1)}\otimes\dots\otimes g_{\sigma(i)} \mid g_1,\ldots, g_i\in G,
\sigma \text{ a permutation of }\{1,\ldots,i\}\}$
(resp. $\{g_1\otimes\dots\otimes g_i -
(-1)^{\mathsf{sgn}(\sigma)}g_{\sigma(1)}\otimes\dots\otimes
g_{\sigma(i)} \mid g_1,\ldots, g_i\in G, \sigma \text{ a permutation of
}\{1,\ldots,i\}\}$. Then we let $\Sym_i G=\left(\otimes^iG\right)/M_i$
(resp. $\bigwedge^i G=\left(\otimes^i G\right)/N_i$) be the $R$-module
of elements of order $i$ in the \emph{symmetric algebra} (resp. in the
\emph{exterior algebra}).  The Eagon-Northcott complex is then defined
by:
$$\begin{array}{rl}
  \mathsf{EN}(\alpha):&0\rightarrow (\mathsf{Sym}_{f-g} G)^*\otimes \bigwedge^f F\xrightarrow{\sigma_{f-g-1}}(\mathsf{Sym}_{f-g-1}G)^*\otimes \bigwedge^{f-1} F\xrightarrow{\sigma_{f-g}}\\
  &\dots\rightarrow (\mathsf{Sym}_2 G)^*\otimes \bigwedge^{g+2} F\xrightarrow{\sigma_3}G^*\otimes \bigwedge^{g+1} F\xrightarrow{\sigma_2}\bigwedge^g F\xrightarrow{\wedge^g\alpha}\bigwedge^g G.
\end{array}$$

First, notice that as a $R$-module, $\mathsf{Sym}_i G$ (and hence
also its dual $(\mathsf{Sym}_{i} G)^*$) is a free module isomorphic
to $R^{\binom{i+g-1}{i}}$. Similarly, $\bigwedge^i F$ is isomorphic to
$R^{\binom{f}{i}}$ as a $R$-module.
For a detailed description of the maps $\sigma_i$, we refer 
to \cite[Appendix A2H]{Eis01} and \cite[Appendix A2.6]{Eis95}. In the context of this paper, $f=n$, $g=p+1$,
$R=\Q[U]$ with the grading given by
$\wdeg(U_{i,j})=d_i-1$, and the map $\alpha$ corresponds to the
matrix
$$\mathscr U=\begin{pmatrix}
  U_{0,1}&\dots&U_{0,n}\\
  \vdots&\vdots&\vdots\\
  U_{p,1}&\dots&U_{p,n}
\end{pmatrix}.$$

The next step is to take into account the grading $\wdeg(U_{i,j})=d_i-1$. We
use here the classical notation $R(-s)$ to denote the ring $R$ where
the grading has been shifted by $s$, \emph{i.e.} $R(-s)_d=R_{d-s}$ . For instance, if $\eta: R\rightarrow R$ is a morphism of
degree $s$, the induced morphism
$$R(-s)\xrightarrow{\eta} R(0)$$
maps elements of degree $d$ in $R(-s)$ to elements of degree $d$ in $R$.
  Using the notation $s=\sum_{0\leq i\leq p}(d_i-1)$, and taking into account the
grading of $\Q[U]$ and the description of the maps in \cite[Appendix A2H]{Eis01}, the complex can be rewritten as
$$\begin{array}{@{}r@{~}l@{}}
  \mathsf{EN}(\alpha):&0\rightarrow \underset{\substack{i_0+\ldots+i_p\\=\\n-p-1}}{\bigoplus} R\left(-s-\underset{0\leq j\leq p}{\sum}i_j (d_j-1)\right)\xrightarrow{\sigma_{f-g-1}}\underset{\substack{i_0+\ldots+i_p\\=\\n-p-2}}{\bigoplus} R\left(-s-\underset{0\leq j\leq p}{\sum}i_j (d_j-1)\right)^{\binom{n}{n-1}}\\
  &\xrightarrow{\sigma_{f-g}} \underset{\substack{i_0+\ldots+i_p\\=~2}}{\bigoplus} R\left(-s-\underset{0\leq j\leq p}{\sum}i_j (d_j-1)\right)^{\binom{n}{p+3}} \xrightarrow{\sigma_3} \underset{1\leq i\leq p}{\bigoplus} R\left(-s-(d_i-1)\right)^{\binom{n}{p+2}}\xrightarrow{\sigma_2}\\&\rightarrow R(-s)^{\binom{n}{p+1}}\xrightarrow{\wedge^g\alpha}R(0).
\end{array}$$

The last map $\wedge^g\alpha$ sends each generator of $R(-s)^{\binom{n}{p+1}}$ on a maximal minor of $\alpha$ (note that the number of such maximal minors is precisely $\binom{n}{p+1}$), hence the image of $\wedge^g\alpha$ is indeed the ideal generated by the maximal minors.

\medskip

{\bf Example.}
Set $n=4$ and $p=1$. The ideal $\Did$ is generated by the maximal minors of the following $2\times 4$ matrix:
$$\begin{pmatrix}
  U_{0,1}&U_{0,2}&U_{0,3}&U_{0,4}\\
  U_{1,1}&U_{1,2}&U_{1,3}&U_{1,4}
\end{pmatrix}$$

In this case, the Eagon-Northcott complex is
$$\begin{array}{rl}
  \mathsf{EN}:&0\rightarrow R^3\xrightarrow{\sigma_{3}} R^8 \xrightarrow{\sigma_2} R^6\xrightarrow{\sigma_1} R
\end{array},$$

where the letter $R$ stands for the ring $\Q[U]$, and the morphisms $\sigma_i$ are given by the following matrices:
\footnotesize
$$\begin{array}{@{}c@{}}
\begin{array}{ccccc}\wedge\alpha=&(
  U_{0,2} U_{1,4} - U_{1,2} U_{0,4}, & U_{0,3} U_{1,2}-U_{0,2} U_{1,3}, & U_{0,1} U_{1,3} - U_{0,3} U_{1,1}, \\ &U_{0,4} U_{1,1} - U_{1,4} U_{0,1}, &U_{0,4} U_{1,3} - U_{1,4} U_{0,3}, & U_{0,2} U_{1,1} - U_{1,2} U_{0,1})
\end{array}\\
\begin{array}{rcl}
  \sigma_2&=&\begin{pmatrix}
    -U_{0,3}&-U_{1,3}&0&0&U_{0,1}&U_{1,1}&0&0\\
    -U_{0,4}&-U_{1,4}&0&0&0&0&U_{0,1}&U_{1,1}\\
    0&0&-U_{0,4}&-U_{1,4}&0&0&U_{0,2}&U_{1,2}\\
    0&0&-U_{0,3}&-U_{1,3}&U_{0,2}&U_{1,2}&0&0\\
    -U_{0,2}&-U_{1,2}&U_{0,1}&U_{1,1}&0&0&0&0\\
    0&0&0&0&-U_{0,4}&-U_{1,4}&U_{0,3}&U_{1,3}
  \end{pmatrix}\\
  \sigma_3&=&\begin{pmatrix}
    U_{0,1}&U_{1,1}&0\\
    0&U_{0,1}&U_{1,1}\\
    U_{0,2}&U_{1,2}&0\\
    0&U_{0,2}&U_{1,2}\\
    U_{0,3}&U_{1,3}&0\\
    0&U_{0,3}&U_{1,3}\\
    U_{0,4}&U_{1,4}&0\\
    0&U_{0,4}&U_{1,4}
  \end{pmatrix}
\end{array}
\end{array}$$

\normalsize Direct computations show that this is a complex (\emph{i.e.}
for all $i$, $\sigma_{i-1}\circ\sigma_i=0$) and $\mathsf{Im}(\sigma_1)=\Did$.
 Taking the grading into account, the Eagon-Northcott complex is rewritten as
$$\begin{array}{rl}
  \mathsf{EN}:&0\rightarrow R(-3d_0-d_1-4)\oplus R(-2d_0-2d_1-4)\oplus R(-d_0-3d_1-4) \xrightarrow{\sigma_{3}}\\& R(-2d_0-d_1-3)^4\oplus R(-d_0-2d_1-3)^4\xrightarrow{\sigma_2} R(-d_0-d_1-2)^6\xrightarrow{\wedge\alpha} R(0)\rightarrow R/\Did\rightarrow 0.
\end{array}$$

\subsection[Hilbert series, degree of regularity]{Hilbert series and degree of regularity}
\label{sec:HScritpointsmixed}
The next goal is to derive an explicit formula for the Hilbert series
of $\Icrit(q^\infty,\F^\infty)$. We use the fact that the Hilbert
series can be computed once a free resolution is known, since the
Hilbert series of $I$ is equal to the alternate sum of the Hilbert
series of the free modules occurring in the resolution. This is a
classical strategy for obtaining an explicit formula for the Hilbert
series (see \emph{e.g.} \cite[Theorem 1.11]{Eis01} for more details).
The next proposition is a special case of \cite[Prop.~2.4]{budur2004hilbert}, which gives an
explicit formula for the Hilbert series for more general gradings.

\begin{proposition}\label{prop:HSDmixed}
The weighted Hilbert series of the ideal $\Did\subset \Q[U]$ generated by the maximal minors of the matrix $\mathscr U$
with $\wdeg(U_{i,j})=d_i-1$ is the power series expansion of the rational function
$$\wHS_{\Q[U]/\Did}(t)=\frac{1-\left[\underset{0\leq k\leq n-p-1}{\sum}\left[(-1)^k\underset{i_0+\ldots+i_p=k}{\sum}\binom{n}{p+k+1}\cdot t^{\underset{0\leq j\leq p}{\sum}(i_j+1)(d_j-1)}\right]\right]}{\prod_{0\leq i\leq p} (1-t^{d_i-1})^n}.$$
\end{proposition}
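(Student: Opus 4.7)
The plan is to derive the formula as the alternating sum of Hilbert series of the terms of the Eagon-Northcott complex, interpreted as a minimal graded free resolution of $R/\Did$.

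First I would check that $\mathsf{EN}(\alpha)$ is actually a resolution. By a classical theorem (see for example \cite[Thm.~A2.60]{Eis95} or Bruns-Vetter), the Eagon-Northcott complex attached to a $g\times f$ matrix is exact as soon as the ideal of its maximal minors has the maximal possible depth $f-g+1$. Here the entries $U_{i,j}$ are independent indeterminates, so the determinantal ideal $\Did\subset R=\Q[U]$ is perfect of codimension $n-p$, and depth equals codimension because $R$ is Cohen-Macaulay. Thus $\mathsf{EN}(\alpha)$ is exact, and augmenting by the projection $R \to R/\Did$ yields a graded free resolution
$$0\to F_{n-p-1}\to F_{n-p-2}\to\cdots\to F_0\to R\to R/\Did\to 0.$$

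Second, I would identify $\wHS_R(t)$ and each $\wHS_{F_k}(t)$. With the grading $\wdeg(U_{i,j})=d_i-1$ the ring $R$ is a polynomial ring in $n(p+1)$ variables of known weights, so
$$\wHS_R(t)=\prod_{i=0}^p (1-t^{d_i-1})^{-n}.$$
Reading off the description of $\mathsf{EN}(\alpha)$ recalled in Section \ref{sec:eagonnorthcott}, the module $F_k$ is a direct sum, indexed by the tuples $(i_0,\ldots,i_p)$ with $i_0+\cdots+i_p=k$, of shifted copies $R(-s-\sum_j i_j(d_j-1))^{\binom{n}{p+k+1}}$, where $s=\sum_{j=0}^p(d_j-1)$. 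Using $s+\sum_j i_j(d_j-1)=\sum_j(i_j+1)(d_j-1)$, this gives
$$\wHS_{F_k}(t)=\binom{n}{p+k+1}\;\wHS_R(t)\sum_{i_0+\cdots+i_p=k} t^{\sum_{j=0}^p(i_j+1)(d_j-1)}.$$

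Finally, the alternating sum of Hilbert series along an exact complex vanishes, so
$$\wHS_{R/\Did}(t)=\wHS_R(t)-\sum_{k=0}^{n-p-1}(-1)^{k}\wHS_{F_k}(t).$$
Substituting and factoring out $\wHS_R(t)=1/\prod_{i=0}^p(1-t^{d_i-1})^n$ yields exactly the stated rational function.

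The only nontrivial step is the exactness claim; once that is in hand the rest is bookkeeping. The potential pitfall is making sure the grading shifts at the $k$-th slot are correctly accounted for (they mix the exterior power on the columns, which contributes $s$, with the symmetric power on the rows, which contributes the $\sum_j i_j(d_j-1)$ part). Alternatively, one can simply invoke \cite[Prop.~2.4]{budur2004hilbert}, which treats an even more general weighted setting and specializes to the present formula.
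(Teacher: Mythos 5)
Your proof is correct and follows essentially the same route as the paper: both read the weighted Hilbert series off the Eagon-Northcott complex as the alternating sum over the free modules, and both observe that the result is alternatively a special case of \cite[Prop.~2.4]{budur2004hilbert}. The only difference is that you explicitly justify exactness via the depth/codimension criterion, a point the paper simply asserts by referring to the complex as a free resolution of $\Did$.
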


\begin{proof}
  According to \cite[Theorem 1.11]{Eis01}, the Hilbert series of a graded ideal can be
  computed from a minimal free resolution: it equals the
  alternate sum of the Hilbert series of the free modules occurring in
  the resolution.  For $\ell,j\in \mathbb N$, the Hilbert series of
  $R(-\ell)^j$ equals
$$\wHS_{R(-\ell)^j}(t)=\frac{j t^\ell}{\prod_{0\leq i\leq p} (1-t^{d_i-1})^n}.$$

Moreover, the Hilbert series of a direct sum of modules is equal to the sum of their Hilbert series. Therefore, by considering the alternate sum of the Hilbert series of the free modules in the Eagon-Northcott complex (which is a free resolution of $\Did$), direct computations yield the formula for the weighted Hilbert series of $\Did$.
\end{proof}

The degree of regularity can be extracted from the Hilbert series, yielding the following formula:

\begin{corollary}\label{coro:dreg}
For generic homogeneous polynomials $(q^\infty,f_1^\infty,\ldots, f_p^\infty)$, the degree of regularity of $\Icrit(\frakq,\Fgenh)$ is 
$$\displaystyle\dreg(\Icrit(\frakq,\Fgenh)) =  (n-p-1) \underset{0\leq i\leq p}{\max}\{d_i-1\} -n-p+d_0 + 2\sum_{1\leq i\leq p} d_i.$$
\end{corollary}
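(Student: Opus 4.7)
The plan is to directly combine the two main results already proved in the section with the 0-dimensionality from Lemma \ref{lem:dim0}, and then read off the degree from the resulting closed form.

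First I would substitute the explicit formula from Proposition \ref{prop:HSDmixed} into the identity given by Corollary \ref{coro:HSImixed}. The denominator $\prod_{0\leq i\leq p}(1-t^{d_i-1})^n$ of $\wHS_{\Q[U]/\Did}(t)$ cancels exactly against the factor $(1-t^{d_0-1})^n\prod_{1\leq i\leq p}(1-t^{d_i-1})^n$ appearing in the numerator of the conversion formula, so
$$\HS_{\Q[X]/\Icrit(\frakq,\Fgenh)}(t)=\frac{P(t)\cdot\prod_{1\leq i\leq p}(1-t^{d_i})}{(1-t)^n},$$
where $P(t)=1-\sum_{k=0}^{n-p-1}(-1)^k\sum_{i_0+\cdots+i_p=k}\binom{n}{p+k+1}\,t^{\sum_{0\leq j\leq p}(i_j+1)(d_j-1)}$. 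By Lemma \ref{lem:dim0} the ideal is zero-dimensional, hence its Hilbert series is a polynomial, so $(1-t)^n$ must divide $P(t)\prod_i(1-t^{d_i})$. The degree of regularity is then $1+\deg\bigl(P(t)\prod_i(1-t^{d_i})\bigr)-n$.

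Second, I would compute the degree of the numerator $P(t)\prod_{i=1}^p(1-t^{d_i})$. Writing $s=\sum_{0\leq j\leq p}(d_j-1)$ and $D'=\max_{0\leq j\leq p}\{d_j-1\}$, the exponent $s+\sum_j i_j(d_j-1)$ is maximized over the index set $i_0+\cdots+i_p=k$, $0\leq k\leq n-p-1$, by taking $k=n-p-1$ and placing all the weight on an index $j^{\star}$ achieving $d_{j^{\star}}-1=D'$, giving maximum exponent $s+(n-p-1)D'$. The second factor $\prod_{i=1}^p(1-t^{d_i})$ has degree $\sum_{i=1}^p d_i$. Because these maxima are reached by a unique combination of leading monomials in each factor, there is no cancellation (the leading coefficient of $P$ at this degree is a signed sum of products of binomial coefficients which is easily seen to be nonzero, and the leading coefficient of the second factor is $(-1)^p$). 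Therefore
$$\deg\bigl(P(t)\!\prod_{i=1}^p(1-t^{d_i})\bigr)=s+(n-p-1)D'+\sum_{i=1}^p d_i=d_0-p-1+2\!\sum_{i=1}^p d_i+(n-p-1)D',$$
and subtracting $n$ and adding $1$ gives exactly the claimed formula.

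The only genuine obstacle is verifying that the leading monomial of $P(t)$ does not vanish — equivalently, that the alternating sum of binomial coefficients collecting all tuples $(i_0,\ldots,i_p)$ with $\sum i_j=n-p-1$ supported on the arg-max set $\{j:d_j-1=D'\}$ is nonzero. When $D'$ is attained uniquely this coefficient is $\pm 1$, and more generally it is a signed stars-and-bars count $\pm\binom{n-p-2+r}{r-1}$ (with $r$ the number of maximizing indices), which is strictly positive. Everything else is bookkeeping on the rational function arising from Proposition \ref{prop:HSDmixed}.
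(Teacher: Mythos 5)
Your proof is correct and follows essentially the same route as the paper: substitute Proposition \ref{prop:HSDmixed} into Corollary \ref{coro:HSImixed}, observe the denominators cancel, invoke Lemma \ref{lem:dim0} so that the series is a polynomial and $\dreg=\deg(\HS)+1$, and then read off the top degree. One small point in your favor: the paper simply asserts that the maximal degree term of the numerator in Proposition \ref{prop:HSDmixed} is attained at $k=n-p-1$ with $i_{j_0}=k$ (and hence that the degree of the product is the sum of top degrees), while you actually verify that the leading coefficient of $P(t)$ does not vanish when several indices achieve $\max\{d_j-1\}$ — in that case the contributing tuples all have $k=n-p-1$, hence the same sign $(-1)^{n-p-1}$, with $\binom{n}{p+k+1}=\binom{n}{n}=1$, so the coefficient is $\pm\binom{n-p-2+r}{r-1}\neq 0$ as you say. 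That is a legitimate filling-in of a step the paper leaves implicit, but it does not change the structure of the argument.
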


\begin{proof}
  Since the ideal $\Icrit(\frakq,\Fgenh)$ is $0$-dimensional (Lemma
  \ref{lem:dim0}), $\HS_{\Q[X]/\Icrit(\frakq,\Fgenh)}(t)$ is a
  polynomial and
  $\dreg=\deg(\HS_{\Q[X]/\Icrit(\frakq,\Fgenh)}(t))+1$.  Let
  $j_0$ be the index of one of the maximal degrees:
  $d_{j_0}=\max_{0\leq j\leq p}\{d_j\}$.  In the sums in the numerator
  of the formula given in Proposition \ref{prop:HSDmixed}, the maximal
  degree is reached when $k=n-p-1$, $i_{j_0}=k$, and $i_j=0$ for
  $j\neq j_0$. Therefore the degree of the numerator of
  $\wHS_{\Q[U]/\Did}(t)$ equals
$$\begin{array}{ll}&\deg\left(1-\left[\underset{0\leq k\leq n-p-1}{\sum}\left[(-1)^k\underset{i_0+\ldots+i_p=k}{\sum}\binom{n}{p+k+1}t^{\underset{0\leq j\leq p}{\sum}(i_j+1)(d_j-1)}\right]\right]\right)\\ =&(n-p-1) (\max\{d_j\}-1) +\sum_{0\leq j\leq p} (d_j-1).\end{array}$$

On the other hand, we have
$$\begin{cases}
  \displaystyle\deg\left(\prod_{1\leq i\leq p} (1-t^{d_i})\right)=\sum_{1\leq i\leq p}d_i;\\
  \displaystyle\deg\left((1-t)^n\right)=n.
\end{cases}$$

Therefore, using the formula in Corollary \ref{coro:HSImixed}, we obtain
$$\begin{array}{rcl}
  \deg(\HS_{\Q[X]/\Icrit(\frakq,\Fgenh)})&=&\displaystyle (n-p-1) (\max\{d_i\}-1) +\sum_{0\leq i\leq p} (d_i-1) +
  \displaystyle \sum_{1\leq i\leq p}d_i -n\\
  &=&\displaystyle (n-p-1) \max\{d_i-1\} -n-p+d_0-1 + 2\sum_{1\leq i\leq p} d_i,\end{array}$$

and hence $\displaystyle\dreg =  (n-p-1) \max\{d_i-1\} -n-p+d_0 + 2\sum_{1\leq i\leq p} d_i.$
\end{proof}

\subsection{Grothendieck polynomials}
In this section, we discuss briefly another approach to compute the Hilbert series of weighted determinantal via Grothendieck polynomials introduced in \cite{LasSch82}.

The numerator of the rational function in Proposition
\ref{prop:HSDmixed} -- also called $K$-polyno\-mial - is equal to the
evaluation of Grothendieck polynomials at powers of $t$, see
\cite[Theorem A]{KnuMil05}. On the other hand, the
\emph{(multi-)degree} of the ideal may be expressed in the evaluation
of Schubert polynomials \cite[Theorem 15.40]{MilStu05}. In the sequel,
$S_{n+1}$ denotes the group of permutations of the set $\{1,\ldots,
n+1\}$ and for all $i$, $\sigma_i$ is the transposition
$i\leftrightarrow i+1$.

\begin{definition}
The divided difference operators $\partial_i$ are defined by
$$\forall H\in\Z[t_1,\ldots, t_{n+1}], \partial_i H=\dfrac{H(t_1,\ldots, t_{n+1})-H(t_1,\ldots,t_{i-1},t_{i+1},t_i,t_{i+2},\ldots t_{n+1})}{t_i-t_{i+1}}.$$

Let $w_0\in S_{n+1}$ be the permutation $w(i)=n+2-i$.
For $w\in S_{n+1}$, the \emph{Grothendieck polynomial} $\mathcal G_w\in\Z[t_1,\ldots, t_{n+1}]$ is defined by
$$
\begin{array}{rcl}
  \mathcal G_{w_0}(t_1,\ldots, t_{p+1})&=&\displaystyle\prod_{i=1}^{n+1}\left(1-t_i\right)^{n+1-i},\\
  \mathcal G_{w\cdot\sigma_i}(t_1,\ldots, t_{p+1})&=&-\partial_i \left(t_{i+1}\mathcal G_w(t_1,\ldots, t_{p+1})\right) \text{ when }{\rm length}(w\cdot\sigma_i)<{\rm length}(w).
\end{array}
$$
\end{definition}

Let $S_{n+1}$ be the group of permutations on the set $\{1,\ldots, n+1\}$. To the determinantal ideal $\Did$ is associated the following permutation $w\in S_{n+1}$ (see \cite[Chapter 15]{MilStu05} for details):
\begin{itemize}
  \item $w(i)=i$ for $i\in\{1,\ldots,p\}$;
  \item $w(i)=i+1$ for $i\in\{p+1,\ldots,n\}$;
  \item $w(n+1)=p+1$.
\end{itemize}
In that case, the Grothendieck polynomial $\mathcal G_w$ associated to $w$ is a polynomial in $\Z[t_1,\ldots, t_{p+1}]$. Its evaluation at 
$t_1=t^{d_0-1},\ldots, t_{p+1}=t^{d_p-1}$ yields the desired numerator of the weighted Hilbert series of $\Did$. This representation provides more combinatorial insights, and it would be interesting to investigate if the formula for the degree of regularity can also be obtained from the evaluation of the Grothendieck polynomials. Moreover, this approach extends to matrices of corank greater than $1$ while the Eagon-Northcott complex is restricted to the case of maximal minors. We refer the reader to \cite{KnuMil05} for more details.

\medskip

{\bf Example.} Set $n=3$, $p=1$, $d_0=3$, $d_1=2$. This corresponds to the problem of minimizing a cubic function in three variables on a quadric surface. The ideal $\Did\in\Q[U_{0,1},U_{0,2},U_{0,3},U_{1,1},U_{1,2},U_{1,3}]
$ is generated by the $2$-minors of the matrix $(U_{i,j})$. The grading is given by
$\deg(U_{0,i})=d_0-1=2$ and $\deg(U_{1,i})=d_1-1=1$ for $i\in
\{1,2,3\}$.
In that case, Proposition \ref{prop:HSDmixed} yields
$$\begin{array}{rcl}\wHS_{\Q[U]/\Did}(t)&=&\dfrac{1-\left[\underset{0\leq k\leq 1}{\sum}\left[(-1)^k\underset{i_0+i_1=k}{\sum}\binom{3}{2+k}t^{2 i_0+i_1+3}\right]\right]}{(1-t^{2})^3(1-t)^3}\\
&=&\dfrac{t^5+t^4-3 t^3+1}{(1-t^{2})^3(1-t)^3}.
\end{array}
$$

The permutation $w\in S_4$ associated to the ideal $\Did$ is
given by $w(1)=1, w(2)=3, w(3)=4, w(4)=2$.
The corresponding Grothendieck polynomial is
$$\mathcal G_w(t_1,t_2)=t_1^2 t_2 + t_1 t_2^2 -3 t_1 t_2 +1.$$
Evaluating $\mathcal G_w$ at $(t^2,t)$ recovers the $K$-polynomial of the determinantal ideal:
$$\mathcal G_w(t^2,t)=t^5 + t^4 -3 t^3 +1.$$

\subsection{Complexity analysis}
\label{sec:critpointsmixedcompl}

We bound in this section the complexity of the following general solving strategy: first, one computes a Gr\"obner basis of $\Icrit(q,\F)$ with respect to the graded reverse lexicographical ordering (\emph{grevlex} for short) with the $F_4$/$F_5$ algorithm. Then, the FGLM algorithm is used to convert it into a lexicographical Gr\"obner basis. Once a lexicographical Gr\"obner basis is known, a rational parametrization of the critical points can be computed, for instance with the RUR algorithm \cite{Rou99}. Since the most costly steps of the solving process are the Gr\"obner bases computations, we focus in this paper on their complexities.

First, we need to estimate the complexity in terms of the degree and of the witness degree. Since Gr\"obner bases computations can be reduced to the computation of row echelon forms of Macaulay matrices, we have the following estimate:

\begin{theorem}\label{thm:complF4}
  A grevlex Gr\"obner basis of $\Icrit(\frakqa,\Fgena)$ can be computed within
$$O\left(\left(p+\binom{n}{p+1}\right) \binom{n+\dwit}{n}^\omega\right)$$ arithmetic operations in $\Q$, where $\dwit\leq\dreg(\Icrit(\frakq,\mathbf F^\infty))$
and $\omega$ is a feasible exponent for the matrix multiplication ($\omega<2.373$ with Williams' algorithm \cite{Vas11}).
\end{theorem}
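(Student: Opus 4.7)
The plan is to realize a grevlex Gr\"obner basis as the output of a single row echelon form computation on the Macaulay matrix in degree $\dwit$. Set $N = \binom{n+\dwit}{n}$, the number of monomials of degree at most $\dwit$ in $\Q[X_1,\ldots,X_n]$, and let $h_1,\ldots,h_{p+\binom{n}{p+1}}$ denote the generators $f_1,\ldots,f_p,m_1,\ldots,m_{\binom{n}{p+1}}$ of $\Icrit(q,\F)$. I would build the matrix $\mac_{\dwit}$ whose rows are the coefficient vectors (in the monomial basis, ordered by grevlex) of all products $t \cdot h_\ell$ for monomials $t$ with $\deg(t) \leq \dwit - \deg(h_\ell)$. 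Each generator contributes at most $N$ such shifts, so $\mac_{\dwit}$ has at most $\left(p + \binom{n}{p+1}\right) \cdot N$ rows and exactly $N$ columns.

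Next I would invoke the characterization of $\dwit$ recalled in Section 1.3: by definition, the grevlex Gr\"obner basis of $\Icrit(q,\F)$ is contained in $T_{\dwit}$, which is precisely the row span of $\mac_{\dwit}$. Computing the reduced row echelon form of $\mac_{\dwit}$ with respect to the grevlex order on columns therefore yields a spanning set of $T_{\dwit}$ whose leading monomials exhaust the leading monomials of elements of $T_{\dwit}$; extracting those rows whose leading monomials are not divisible by a smaller leading monomial produces a grevlex Gr\"obner basis. This is essentially the Lazard/Macaulay characterization, and the correctness is a direct consequence of the definition of the witness degree.

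To conclude, I would apply the standard complexity of fast row echelon form: for a matrix of size $m \times N$ with $m \geq N$, a row echelon form can be computed in $O\!\left(m \cdot N^{\omega-1}\right)$ arithmetic operations in $\Q$. Substituting $m \leq \left(p + \binom{n}{p+1}\right) N$ yields
\[
O\!\left(\left(p + \binom{n}{p+1}\right) N \cdot N^{\omega-1}\right) = O\!\left(\left(p + \binom{n}{p+1}\right) \binom{n+\dwit}{n}^\omega\right),
\]
which is the stated bound. The final inequality $\dwit \leq \dreg(\Icrit(\frakq, \Fgenh))$ follows from Corollary \ref{coro:dreg}: under the genericity assumptions the homogeneous ideal $\Icrit(\frakq, \Fgenh)$ has its Hilbert function vanishing in every degree $\geq \dreg$, which forces the top-degree components of the generators to already generate everything up to that degree, so no new leading monomials of the affine Gr\"obner basis can occur beyond $\dreg$.

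The main subtlety, and what I expect to be the delicate step, is the passage from the affine $\dwit$ to the homogeneous $\dreg$. One must argue that for generic coefficients, the homogenization of an element of the affine grevlex Gr\"obner basis still lies in the row span of the homogeneous Macaulay matrix in the same degree, so that any leading monomial appearing in the affine computation is already witnessed by the homogeneous one, whose witness degree coincides with $\dreg$ (since both equal $\deg(\HS_{\Q[X]/\Icrit(\frakq,\Fgenh)}) + 1$ for $0$-dimensional homogeneous ideals). Once this is in place, the rest is routine linear-algebra bookkeeping and the result follows.
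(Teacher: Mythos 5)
Your linear-algebra bookkeeping is fine and matches the paper's: the Macaulay matrix has at most $\left(p+\binom{n}{p+1}\right)N$ rows and $N=\binom{n+\dwit}{n}$ columns, and Storjohann's bound $O(AB\min(A,B)^{\omega-2})$ for the row echelon form gives exactly the claimed $O\left(\left(p+\binom{n}{p+1}\right)N^\omega\right)$. The extraction of a Gr\"obner basis from the reduced echelon form once the matrix is built in degree $\geq\dwit$ is also standard.

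The genuine gap is the inequality $\dwit\leq\dreg(\Icrit(\frakq,\Fgenh))$, and your citation of Corollary~\ref{coro:dreg} for it is incorrect: that corollary only computes the numerical value of $\dreg$ for the top-degree-form ideal $\Icrit(\frakq,\Fgenh)$; it says nothing about the witness degree of the \emph{affine} ideal $\Icrit(q,\F)$. You correctly flag this as ``the delicate step,'' but the sketch you offer (homogenize an affine basis element and argue it stays in the homogeneous row span) does not close it and is not how the paper proceeds. The paper's chain is: (i) dehomogenizing a grevlex Gr\"obner basis of the \emph{homogenized} ideal $\Icrit(\frakqhom,\F^h)$ yields a Gr\"obner basis of $\Icrit(q,\F)$, hence $\dwit\leq\dmax_{\Shom}(\Icrit(\frakqhom,\F^h))$; (ii) Corollary~\ref{coro:Hilberthomm} establishes, via a Macaulay-unmixedness regular-sequence argument in $\Q[U,X,H]/\Did$, that $\Icrit(\frakqhom,\F^h)$ and $\Icrit(\frakq,\Fgenh)$ have the same Hilbert series over $\K[X,H]$; (iii) Corollary~\ref{coro:dmaxdreg} then shows their grevlex leading-monomial ideals agree, so $\dmax_{\Shom}(\Icrit(\frakqhom,\F^h))=\dmax_S(\Icrit(\frakq,\Fgenh))=\dreg(\Icrit(\frakq,\Fgenh))$. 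Step (ii) is the crux and is not ``routine''; it is precisely what is missing from your argument. Your observation that the Hilbert function of $\Icrit(\frakq,\Fgenh)$ vanishes in degrees $\geq\dreg$ is a statement about the wrong ideal — it must first be transported to the homogenized ideal before one can conclude anything about affine leading monomials.
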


\begin{proof}
Postponed to Section \ref{sec:proofcompl}.
\end{proof}

Finally we can obtain a general formula for the complexity of computing a lexicographical Gr\"obner basis of $\Icrit(\F,q)$ in terms of the generic values of the degree and of the witness degree:

\begin{corollary}\label{coro:complGB}
  Let $p,n\in\N$ with $p<n$, $(d_0,\ldots,d_p)\in\N^p$, and $(q,\F)\in\Q[X]^{p+1}$ be a generic system of respective degrees at most $(d_0,\ldots, d_p)$. Then the complexity of computing a lexicographical Gr\"obner basis of $\Icrit(q,\F)$ is bounded above by
$$O\left(\left(p+ \binom{n}{p+1}\right)\binom{n+\dwit}{n}^\omega+n\cdot\delta^3\right),$$
where $$
\begin{array}{rcl}
\dwit&\leq &\displaystyle(n-p-1) \max\{d_i-1\} -n-p+d_0 + 2\sum_{1\leq i\leq p} d_i\\
\delta&=&\displaystyle\left(\prod_{1\leq i\leq p} d_i\right)
\sum_{i_0+\dots+i_p=n-p} (d_0-1)^{i_0} \dots (d_p-1)^{i_p}.
\end{array}
$$
\end{corollary}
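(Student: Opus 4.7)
The plan is to separate the computation of a lexicographical Gr\"obner basis of $\Icrit(q,\F)$ into its two standard phases and bound each independently. For the first phase, one computes a grevlex Gr\"obner basis; by Theorem \ref{thm:complF4} this costs at most $O\left((p+\binom{n}{p+1})\binom{n+\dwit}{n}^\omega\right)$ arithmetic operations in $\Q$. What remains for this phase is to translate the abstract bound $\dwit$ into the explicit expression in terms of the input degrees $(d_0,\ldots,d_p)$.

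Under the genericity hypothesis on the coefficients of $(q,\F)$, the leading-degree homogeneous parts $(q^\infty,\F^\infty)$ also constitute a generic system of the prescribed degrees. The top-degree block of the Macaulay matrix of $\Icrit(q,\F)$ coincides with the Macaulay matrix of $\Icrit(q^\infty,\F^\infty)$, so the witness degree of the inhomogeneous system is bounded above by that of its homogenization. By Lemma \ref{lem:dim0} the homogeneous ideal is zero-dimensional, whence its witness degree coincides with its degree of regularity, and Corollary \ref{coro:dreg} yields
$$\dwit \leq (n-p-1)\max_{0\leq i\leq p}\{d_i-1\} - n - p + d_0 + 2\sum_{1\leq i\leq p} d_i.$$

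For the second phase, I would invoke the FGLM algorithm \cite[Prop.~4.1]{FauGiaLazMor93}, which converts a grevlex Gr\"obner basis of a zero-dimensional ideal into a lexicographical one in $O(n\delta^3)$ arithmetic operations in $\Q$, where $\delta=\DEG(\Icrit(q,\F))$ equals the $\Q$-dimension of the quotient. Summing the two contributions gives the overall complexity stated in the corollary. The explicit formula for $\delta$ then comes from the B\'ezout-type expression of Nie and Ranestad \cite[Theorem 2.2]{NieRan09} recalled in the Related Works section, which is valid precisely under the present genericity assumption.

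The main obstacle is the careful passage from the homogeneous witness-degree estimate of Corollary \ref{coro:dreg} to the inhomogeneous setting: one must verify that generic inhomogeneous coefficients inherit the non-degeneracy used in the Eagon-Northcott analysis, and that no unexpected cancellations of leading terms occur in lower degrees that could force the grevlex staircase to stabilize only above $\dreg(\Icrit(q^\infty,\F^\infty))$. Once this top-degree comparison is in place, the rest of the argument is a direct aggregation of two well-understood complexity estimates.
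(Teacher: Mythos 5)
Your proposal follows the same route as the paper: Theorem \ref{thm:complF4} for the grevlex phase, Corollary \ref{coro:dreg} for the explicit degree-of-regularity formula, the FGLM bound $O(n\delta^3)$, and the Nie--Ranestad formula for $\delta$; summing these gives the stated bound. The one place where you depart from a straight citation is the bridge from the inhomogeneous witness degree $\dwit$ to $\dreg(\Icrit(q^\infty,\F^\infty))$: you argue informally that the \emph{top-degree block} of the Macaulay matrix of $\Icrit(q,\F)$ matches the Macaulay matrix of $\Icrit(q^\infty,\F^\infty)$, and you flag yourself that ruling out unexpected cancellations is the missing step. That concern is warranted, but it is not a gap in the proof of this corollary, because Theorem \ref{thm:complF4} already asserts $\dwit\le\dreg(\Icrit(q^\infty,\F^\infty))$ as part of its statement; you are re-deriving something you could simply cite. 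The paper's actual justification is a different and more robust device: it homogenizes with a \emph{fresh} variable $H$, uses a regular-sequence/Macaulay-unmixedness argument to show in Corollary \ref{coro:Hilberthomm} that $\K[X,H]/\Icrit(q^h,\F^h)$ and $\K[X,H]/\Icrit(q^\infty,\F^\infty)$ have the same weighted Hilbert series, deduces in Corollary \ref{coro:dmaxdreg} that their leading-term ideals (hence maximal Gr\"obner-basis degrees) coincide and equal $\dreg(\Icrit(q^\infty,\F^\infty))$, and finally uses the standard fact that dehomogenizing a grevlex Gr\"obner basis of $\Icrit(q^h,\F^h)$ yields one of $\Icrit(q,\F)$. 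If you replace your "top-degree block" heuristic with a citation of Theorem \ref{thm:complF4} together with Corollaries \ref{coro:Hilberthomm} and \ref{coro:dmaxdreg}, the aggregation you describe is exactly the paper's proof.
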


\begin{proof}
  This is a direct consequence of Theorem \ref{thm:complF4}, Corollary \ref{coro:dmaxdreg}, Corollary \ref{coro:dreg}, of the complexity of the FGLM algorithm $O\left(n\cdot\delta^3\right)$ \cite[Proposition 4.1]{FauGiaLazMor93} and from the explicit formula for the algebraic degree of polynomial optimization \cite[Theorem 2.2]{NieRan09}.
\end{proof}

In what follows, $A$ (resp. $G$) is the arithmetic (resp. geometric) average of the
multiset
$$\begin{array}{rcl}&&\{d_1,\ldots, d_p,\underbrace{\underset{0\leq i\leq p}\max\{d_i-1\},\dots,\underset{0\leq i\leq p}\max\{d_i-1\}}_{n-p}\},\\
  A&=&\displaystyle\frac{1}{n}\left((n-p)\max_{0\leq i\leq p}\{d_i-1\}+\sum_{1\leq i\leq p}d_i\right)\\
  G&=&\displaystyle\left(\max_{0\leq i\leq p}\{d_i-1\}^{n-p}\prod_{1\leq i\leq
    p}d_i\right)^{1/n}.\end{array}$$

Also, we let $\delta$ denote the generic algebraic degree of polynomial optimization \cite{NieRan09}:
$$\delta=\DEG(\Icrit(q,\F))=\left(\prod_{1\leq i\leq p} d_i\right) \sum_{i_0+\dots+i_p=n-p} (d_0-1)^{i_0} \dots (d_p-1)^{i_p}.$$

The next statement is the main result of this paper and bounds the complexity in terms of $\delta$:

\begin{theorem}\label{thm:complpfixed}
Let $(q,f_1,\ldots, f_p)\in \Q[X]$ be a generic system of polynomials of respective degrees at most $(d_0,\ldots, d_p)$ with $1\leq p<n$, $d_0\geq 1$, $d_1,\ldots, d_p\geq 2$. The complexity of computing a lexicographical Gr\"obner basis of $\Icrit(q,\F)$ is bounded above by
$\delta^{O\left(\log(A)/\log(G)\right)}.$
\end{theorem}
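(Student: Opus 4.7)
The plan is to combine Corollary~\ref{coro:complGB} with the witness-degree bound of Corollary~\ref{coro:dreg} and the Nie--Ranestad formula for $\delta$, and then to reshape the resulting estimate in terms of the averages $A$ and $G$. Throughout, I write $M=\max_{0\leq i\leq p}\{d_i-1\}$, so that by the definitions in Section~\ref{sec:critpointsmixedcompl} one has $nA=(n-p)M+\sum_{i=1}^p d_i$ and $G^n=M^{n-p}\prod_{i=1}^p d_i$.

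First I would bound the Macaulay-matrix term. From Corollary~\ref{coro:dreg},
$$\dwit\leq (n-p-1)M-n-p+d_0+2\sum_{i=1}^p d_i,$$
and using $d_0\leq M+1$ together with $\sum_{i=1}^p d_i\leq nA$, a direct manipulation gives $\dwit\leq 3nA+O(1)$. The standard bound $\binom{n+\dwit}{n}\leq\bigl(e(n+\dwit)/n\bigr)^n$ then yields $\log\binom{n+\dwit}{n}\leq n\log A+O(n)$; the prefactor $p+\binom{n}{p+1}\leq 2^{n+1}$ contributes only an additive $O(n)$ in the logarithm.

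Next I would produce a matching lower bound on $\delta$. The sum $\sum_{i_0+\dots+i_p=n-p}(d_0-1)^{i_0}\cdots(d_p-1)^{i_p}$ in the formula for $\delta$ contains the single monomial $M^{n-p}$ obtained by concentrating all the exponent at an index achieving the maximum, which gives
$$\delta\geq\Bigl(\prod_{i=1}^p d_i\Bigr)M^{n-p}=G^n,$$
hence $n\leq\log\delta/\log G$ (note $G>1$, since $d_1,\ldots,d_p\geq 2$ and $p\geq 1$). By AM--GM applied to the multiset defining $A$ and $G$ one has $A\geq G$, so $\log A/\log G\geq 1$. Plugging these estimates into the complexity of Corollary~\ref{coro:complGB} gives
$$\omega\cdot\log\binom{n+\dwit}{n}+\log(n\delta^3)=O(n\log A)+O(\log\delta)=O\!\left(\frac{\log A}{\log G}\,\log\delta\right),$$
using $n\log A\leq(\log A/\log G)\log\delta$ (since $n\log G\leq\log\delta$) and $\log\delta\leq(\log A/\log G)\log\delta$. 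This is exactly the announced bound $\delta^{O(\log A/\log G)}$.

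The delicate step is to align the two estimates so that both carry the same exponent $n$: the upper bound $A^n$ on the size of the Macaulay matrix and the lower bound $G^n$ on $\delta$. Once this alignment is arranged, the conversion $n\leq\log\delta/\log G$ produces precisely the ratio $\log A/\log G$ in the exponent of the final complexity. The shape of Corollary~\ref{coro:dreg} is already tailored to make $(n-p-1)M+\sum d_i$ essentially $nA$, and the presence of the monomial $M^{n-p}$ in the Nie--Ranestad formula delivers $G^n$ as a lower bound for $\delta$; these two observations are the essential ingredients.
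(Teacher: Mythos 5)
Your proof follows essentially the same path as the paper's: bound $\dwit$ linearly in $nA$ via Corollary~\ref{coro:dreg}, bound $\binom{n+\dwit}{n}$ by a quantity of the form $A^{O(n)}$, lower-bound $\delta$ by $G^n$ to turn the exponent $n$ into $\log\delta/\log G$, and use $A\geq G$ so that the lower-order contributions (FGLM, the minor count) are swallowed by $(\log A/\log G)\log\delta$. The one place you deviate slightly is in handling the prefactor $p+\binom{n}{p+1}$: you bound it by $2^{n+1}$, adding $O(n)$ to the logarithm, whereas the paper bounds it by $\delta^{O(1)}$ via $\delta\geq 2^p\binom{n-1}{p-1}$; both choices lead to the same conclusion under the implicit assumption, shared with the paper, that the additive $O(n)$ slop can be absorbed into $O(n\log A)$ and $O(\log\delta)$, which holds once $\log G$ (hence $\log A$) is bounded away from zero, e.g.\ as soon as $\max_i d_i\geq 3$. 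Your explicit extraction of the term $M^{n-p}$ from the Nie--Ranestad sum to establish $\delta\geq G^n$, and your explicit invocation of AM--GM for $A\geq G$, are useful clarifications of steps the paper leaves unjustified.
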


\begin{proof}
  First note that the number of sequences $(i_0,\ldots, i_p)$ such that $i_0=0$ and $\sum_{j=0}^p i_j=n-p$ is $\binom{n-1}{p-1}$. Consequently, the inequality $\delta\geq 2^p\binom{n-1}{p-1}$ holds, and hence the algorithm FGLM is polynomial in $\delta$ since its
  complexity is $O(n\cdot\delta^3)$ \cite[Thm.~5.1]{FauGiaLazMor93} and $n\leq 2\binom{n-1}{p-1}\leq\delta$. It is thus sufficient to prove that 
  a grevlex Gr\"obner basis of $\Icrit(q,\F)$ can be computed within
  $\delta^{O\left(\log(A)/\log(G)\right)}$ arithmetic operations. 

Next, the same inequality $\delta\geq 2^p\binom{n-1}{p-1}$ yields
$$p+\binom{n}{p+1}=\delta^{O(1)}.\label{eq:ComplB}$$
Since $\dwit+n<2A n$, we obtain
\begin{empheq}{align}\displaystyle O\left(\left(p+\binom{n}{p+1}\right)\binom{n+\dwit}{n}^\omega\right)\leq&\displaystyle \delta^{O(1)}\cdot O\left(
\binom{2 An}{n}^\omega\right)\notag\\
  \leq& \displaystyle \delta^{O(1)}\cdot O\left(\frac{(2An)^{\omega n}}{(n!)^\omega}\right)\notag\\
  \leq&\delta^{O(1)}\cdot A^{O\left(n\right)}\notag.\label{eq:ComplA}\end{empheq}
Finally, using the fact that $\delta\geq G^n$, we obtain
$$\frac{\log\left({\left(p+\binom{n}{p+1}\right)\binom{n+\dwit}{n}^\omega}\right)}{\log \delta}=O\left(\frac{n\log A}{\log \delta}\right)= O\left(\frac{n\log A}{n\log G}\right)=O\left(\frac{\log A}{\log G}\right).$$
\end{proof}

The next statement shows that this complexity meets the best known complexity bound $D^{O(n)}$. Note that the codimension $p$ does not appear in the following complexity bound: this comes from the fact that $p\leq n$ and hence the dependency in $p$ is hidden in the $O(n)$.
\begin{corollary}\label{coro:complDbounded}
Set $D=\max_{0\leq i\leq p}\{d_i\}$. If $D\geq 2$ and with the same notations and the same genericity assumptions as in Theorem \ref{thm:complpfixed}, the complexity of computing a lexicographical Gr\"obner basis of $\Icrit(q,\F)$ is bounded above by
$D^{O(n)}$.
\end{corollary}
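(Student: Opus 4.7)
The plan is to backtrack into the proof of Theorem \ref{thm:complpfixed} and pick up the intermediate bound $\delta^{O(1)}\cdot A^{O(n)}$ derived there, instead of trying to convert the final estimate $\delta^{O(\log A/\log G)}$ directly into a bound in terms of $D$ and $n$. The reason is technical but important: when $D$ is small (for instance $D=2$ with $\max\{d_i-1\}=1$), the geometric mean $G$ can be arbitrarily close to $1$, so $\log A/\log G$ blows up even though the true complexity stays small; the intermediate form $\delta^{O(1)}\cdot A^{O(n)}$ avoids this degeneracy.

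From this intermediate bound the argument reduces to two elementary upper estimates. First, every element of the multiset $\{d_1,\ldots, d_p,\max\{d_i-1\},\ldots,\max\{d_i-1\}\}$ is at most $D$, hence $A\leq D$ and $A^{O(n)}\leq D^{O(n)}$. Second, applying trivial bounds to the Nie--Ranestad formula $\delta = \bigl(\prod_{i=1}^p d_i\bigr)\sum_{i_0+\cdots+i_p=n-p}(d_0-1)^{i_0}\cdots(d_p-1)^{i_p}$ gives $\delta \leq D^p\cdot \binom{n}{p}\cdot (D-1)^{n-p} \leq 2^n D^n$, and therefore $\log\delta \leq n(\log 2+\log D)\leq 2n\log D$ under the hypothesis $D\geq 2$, so $\delta^{O(1)}\leq D^{O(n)}$.

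Combining the two estimates and adding the FGLM contribution $O(n\delta^3)=D^{O(n)}$ produces the claimed bound. The only subtle point is that the hypothesis $D\geq 2$ is exactly what is needed to absorb the $\log 2$ term inside $O(\log D)$; apart from this there is no conceptual obstacle, and the independence of the final bound from the codimension $p$ is immediate since $p\leq n$.
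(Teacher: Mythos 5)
Your argument is correct and follows the same route as the paper's own proof: both retreat to the intermediate estimate $\delta^{O(1)}\cdot A^{O(n)}$ obtained inside the proof of Theorem~\ref{thm:complpfixed}, use $A\leq D$ together with $\delta\leq D^p(D-1)^{n-p}\binom{n}{p}\leq(2D)^n$, and then absorb the FGLM contribution. One small caveat on your motivating remark: when $D=2$ the multiset degenerates to $\{2,\ldots,2,1,\ldots,1\}$, giving $\log A/\log G=\log(1+p/n)/\bigl((p/n)\log 2\bigr)\leq 1/\log 2$, so the ratio does not in fact blow up in that case; the genuine reason to prefer the intermediate form is simply that it produces $D^{O(n)}$ at once, whereas extracting $D^{O(n)}$ from the final statement $\delta^{O(\log A/\log G)}$ would require a separate argument controlling $\log D/\log G$.
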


\begin{proof}
By Corollary \ref{coro:complGB}, we have
$$\begin{array}{rcl}\dwit&\leq&\displaystyle(n-p-1) \max\{d_i-1\} -n-p+d_0 + 2\sum_{1\leq i\leq p} d_i\\
&\leq&\displaystyle(n-p-1) (D-1) -n-p+D + 2p D.\end{array}$$
Next, a proof exactly similar to that of Theorem \ref{thm:complpfixed} shows that the complexity is bounded above by
$$\delta^{O(1)}\cdot \left(\dfrac{n-p}{n}(D-1)+\dfrac pn D\right)^{O(n)}=\delta^{O(1)}\cdot D^{O(n)}.$$
The proof is concluded by noticing that $\displaystyle\delta\leq D^p(D-1)^{n-p}\binom np\leq (2 D)^n=D^{O(n)}$.
\end{proof}

In several applications, $p$ is small compared to $n$. Recall that the size of the Gr\"obner basis is polynomial in $\delta$ \cite[Coro.~2.1]{FauGiaLazMor93}. Although the following estimate is sometimes worse than the one derived in Corollary \ref{coro:complDbounded} (for instance when $p=n-1$), it shows that the complexity is polynomial in $\delta$ for subfamilies of problems where $p$ grows sufficiently slowly with~$n$:

\begin{corollary}\label{coro:cornp}
If $\max\{d_i\}\geq 3$, then $\displaystyle\frac{\log(A)}{\log(G)}= O\left(\frac{n}{n-p}\right)$, and hence the complexity bound in Theorem \ref{thm:complpfixed} can be specialized to 
$$\delta^{O\left(n/(n-p)\right)}.$$
Consequently, if $p<\alpha\, n$ for $0<\alpha<1$, the complexity of computing a lexicographical Gr\"obner basis of $\Icrit(q,\mathbf F)$ (where $q,\mathbf F$ is a generic system) is bounded above by $\delta^{O\left(\frac{1}{1-\alpha}\right)}$.
\end{corollary}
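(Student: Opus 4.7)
The plan is a direct estimation of $\log(A)/\log(G)$ by bounding $A$ from above and $G$ from below in terms of the quantity $M := \max_{0\leq i\leq p}\{d_i-1\}$, which is at least $2$ under the hypothesis $\max\{d_i\}\geq 3$.

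First, I would upper bound $A$. Since each $d_i \leq M+1$ for $i\in\{1,\ldots,p\}$, the sum $\sum_{i=1}^p d_i$ is at most $p(M+1)$, whence
\[
A \;=\; \tfrac{1}{n}\!\left((n-p)M + \sum_{i=1}^p d_i\right) \;\leq\; \tfrac{1}{n}\bigl((n-p)M + p(M+1)\bigr) \;=\; M + \tfrac{p}{n} \;\leq\; 2M.
\]
Hence $\log A \leq \log 2 + \log M$.

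Next I would lower bound $G$. Since each $d_i \geq 2$ for $i\in\{1,\ldots,p\}$, the product $\prod_{i=1}^p d_i$ is at least $2^p$, so
\[
G \;=\; \left(M^{n-p}\prod_{i=1}^p d_i\right)^{\!1/n} \;\geq\; \left(M^{n-p}\cdot 2^p\right)^{1/n},
\]
giving $\log G \geq \tfrac{n-p}{n}\log M + \tfrac{p}{n}\log 2 \geq \tfrac{n-p}{n}\log M$. Combining the two estimates,
\[
\frac{\log A}{\log G} \;\leq\; \frac{n}{n-p}\cdot\frac{\log 2 + \log M}{\log M} \;\leq\; \frac{2n}{n-p},
\]
where the last inequality uses $\log M \geq \log 2$ (which is precisely where the hypothesis $M\geq 2$ is needed). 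This proves $\log(A)/\log(G) = O(n/(n-p))$. Plugging this into Theorem~\ref{thm:complpfixed} yields the first bound $\delta^{O(n/(n-p))}$. For the final claim, if $p < \alpha n$ then $n/(n-p) < 1/(1-\alpha)$, giving $\delta^{O(1/(1-\alpha))}$.

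There is no real obstacle here: the only subtlety is making sure the hypothesis $\max\{d_i\}\geq 3$ is used in exactly the right place, namely to guarantee that the factor $\log 2/\log M$ appearing in the ratio $\log A/\log G$ remains bounded. Without this hypothesis one could have $M=1$ and the ratio would blow up, so the assumption cannot be dropped.
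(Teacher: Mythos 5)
Your proof is correct and takes essentially the same approach as the paper: upper-bound $A$ by (a constant times) the largest degree, lower-bound $\log G$ by $\frac{n-p}{n}$ times $\log$ of the largest degree minus one, and use the hypothesis $\max\{d_i\}\geq 3$ to keep the resulting ratio of logarithms bounded. The only cosmetic difference is that the paper bounds $A\leq D=\max\{d_i\}$ directly and obtains the constant $\log_2(3)$ via $\frac{\log D}{\log(D-1)}$, while you work with $M=D-1$ and the slightly looser $A\leq 2M$, arriving at the constant $2$; both yield $O(n/(n-p))$.
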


\begin{proof}
The first statement is a direct consequence of Theorem \ref{thm:complpfixed} and of the following inequalities
$$
\begin{array}{rcl}
  \log(A)&=&\displaystyle\log\left(\frac 1n\left[\sum_{i=1}^p d_i + (n-p)\max_{0\leq i\leq p}\{d_i-1\}\right]\right)\\
  &\leq&\displaystyle\log\left(\max_{0\leq i\leq p}\{d_i\}\right).\\
\log(G)&=&\displaystyle\frac 1n\left(\sum_{i=1}^p \log(d_i) + (n-p)\log(\max_{0\leq i\leq p}\{d_i-1\})\right)\\
&\geq& \displaystyle\frac{n-p}{n}\log(\max_{0\leq i\leq p}\{d_i-1\}).
\end{array}
$$
Since $\max\{d_i\}\geq 3$, we obtain $\log(A)/\log(G)\leq \frac{\log_2(3)\,n}{n-p}$.
The second statement is a direct consequence of the first statement: if $p<\alpha n$, then $n/(n-p)<1/(1-\alpha)$.
\end{proof}

The next corollary shows that in the context of quadratic programming, the complexity is polynomial in $n$. Such a bound was already obtained by a different approach in \cite{FauSafSpa12}.

\begin{corollary}[quadratic programming]
If $d_0=\dots=d_p=2$ and $(q,\F)$ is a generic quadratic system, then the complexity of computing a lexicographical Gr\"obner basis of $\Icrit(q,\F)$ is bounded by $n^{O(p)}$.
\end{corollary}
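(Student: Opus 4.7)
The plan is to apply Theorem \ref{thm:complpfixed} and show that in the quadratic setting both the ratio $\log(A)/\log(G)$ and the degree $\delta$ simplify to quantities yielding a bound of the form $n^{O(p)}$.

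First, I would specialize the multiset in the statement of Theorem \ref{thm:complpfixed}: with $d_0=\dots=d_p=2$ we have $\max_{0\le i\le p}\{d_i-1\}=1$, so the multiset consists of $p$ copies of $2$ and $n-p$ copies of $1$. A direct computation then gives
\[
A=\frac{2p+(n-p)}{n}=1+\frac{p}{n},\qquad G=\bigl(2^p\cdot 1^{n-p}\bigr)^{1/n}=2^{p/n}.
\]
Using the elementary inequality $\log(1+x)\le x$ valid for $x\ge 0$, applied to $x=p/n$, I would derive $\log(A)\le p/n$, whereas $\log(G)=(p/n)\log 2$. Hence $\log(A)/\log(G)\le 1/\log 2=O(1)$, so Theorem \ref{thm:complpfixed} already yields a complexity bound of $\delta^{O(1)}$.

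The remaining step is to evaluate (or bound) $\delta$ in the quadratic case. Plugging $d_i-1=1$ for all $i$ into the formula for the algebraic degree of polynomial optimization recalled in Corollary \ref{coro:complGB}, every term $(d_0-1)^{i_0}\cdots(d_p-1)^{i_p}$ equals $1$, and the number of compositions $(i_0,\ldots,i_p)$ of $n-p$ into $p+1$ nonnegative parts is $\binom{n}{p}$. Combined with the prefactor $\prod_{1\le i\le p}d_i=2^p$, this gives $\delta=2^p\binom{n}{p}$. Bounding $\binom{n}{p}\le n^p$ yields $\delta\le (2n)^p$.

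Putting the two steps together, the complexity is bounded above by $\delta^{O(1)}\le\bigl((2n)^p\bigr)^{O(1)}=n^{O(p)}$, which is the claim. The only delicate point I anticipate is verifying the inequality $\log(A)/\log(G)=O(1)$ uniformly in $n$ and $p$, but this is handled by the single elementary estimate $\log(1+x)\le x$; all other manipulations are just substitution into the formulas already proved in Theorem \ref{thm:complpfixed} and in Corollary \ref{coro:complGB}.
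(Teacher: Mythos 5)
Your proof is correct, but it takes a genuinely different route from the paper. The paper does not invoke Theorem~\ref{thm:complpfixed} at all: it goes back to Corollary~\ref{coro:complGB}, observes that $d_0=\dots=d_p=2$ gives $\dwit\le 2p+1$, and then bounds each factor of the resulting expression directly, namely $p+\binom{n}{p+1}=n^{O(p)}$, $\binom{n+2p+1}{n}=n^{O(p)}$, and $\delta=2^p\binom{n}{p}=n^{O(p)}$. You instead apply the main Theorem~\ref{thm:complpfixed} and show that in the quadratic case the exponent ratio collapses: with $A=1+p/n$ and $G=2^{p/n}$, the inequality $\log(1+x)\le x$ gives $\log(A)/\log(G)\le 1/\log 2=O(1)$, so the bound becomes $\delta^{O(1)}$, and then the same estimate $\delta=2^p\binom{n}{p}\le(2n)^p$ closes the argument. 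Your route is more economical in that the bound on $\dwit$ and the two binomial-coefficient estimates are absorbed into the single $\delta^{O(\log A/\log G)}$ statement rather than tracked separately; the paper's route is more explicit and does not require re-verifying that the hypotheses of Theorem~\ref{thm:complpfixed} (which formally allow $d_i=2$, unlike Corollary~\ref{coro:cornp} which demands $\max d_i\ge 3$) hold in this degenerate setting. Both proofs are sound and rest on the same key numerical facts $\delta=2^p\binom{n}{p}$ and $\binom{n}{p}\le n^p$.
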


\begin{proof}
If $d_0=\dots=d_p=2$, then $\dwit\leq 2p+1$. The complexity bound in Corollary \ref{coro:complGB} for computing a lexicographical Gr\"obner basis gives
$$O\left(\left(p+ \binom{n}{p+1}\right)\binom{n+2p+1}{n}^\omega+n\cdot\delta^3\right).$$
Since we have
$$
\begin{array}{rcl}
  \displaystyle p+ \binom{n}{p+1}&=&\displaystyle n^{O(p)},\\
  \displaystyle\binom{n+2p+1}{n}&=&\displaystyle n^{O(p)},\\
  \delta&=&2^p\binom np=\displaystyle n^{O(p)},
\end{array}$$
we obtain that the total complexity is bounded by $\displaystyle n^{O(p)}$.
\end{proof}

\section{Experimental results}\label{sec:expe}

The goal of this section is to provide experimental evidence that the
asymptotic complexity results proved in Section
\ref{sec:critpointsmixedcompl} holds in practice for tractable sets of
parameters. We use the software {\tt FGb 1.58}\footnote{{\tt Maple} package available at
  http://www-polsys.lip6.fr/\textasciitilde jcf/Software/FGb/} to
compute the grevlex Gr\"obner basis.

\medskip

{\bf Workstation and experimental setting.} All computations have been
performed on an {Intel Core i5-3570 3.4GHz} processor. Since we wish to count the number of
arithmetic operations, all computations are done over the finite field
$\GF(65521)$ so that there is no effect of the growth of the
coefficients on the timings. Instances are generated as follows: for
$p,n\in\N$, $(d_0,\ldots, d_p)\in\N^{p+1}$ we pick inhomogeneous
polynomials $q,f_1,\ldots, f_p\in\GF(65521)[X_1,\ldots, X_n]$ of
respective degree $d_0,\ldots, d_p$ uniformly at random. Then, we
compute a grevlex Gr\"obner basis of the ideal $\Icrit(q,(f_1,\ldots,
f_p))$ with {\tt FGb}. For all tests, $\max_{0\leq i\leq p}\{d_i\}\geq 3$.

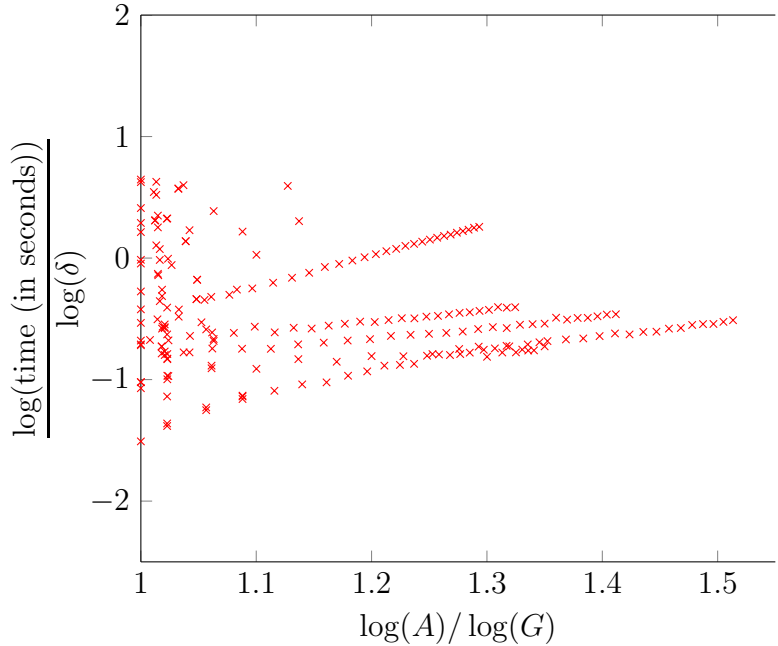
\begin{figure}
\centering
\begin{tikzpicture}[scale=1]
\pgfplotsset{every axis legend/.append style={at={(0.1,0)},anchor=south west}}
\begin{axis}[scale only axis,xmin=1,ymin=-2.5, ymax=2, xmax=1.55,axis y line*=left,xlabel={\normalsize $\log(A)/\log(G)$},
legend columns=3, ylabel={\normalsize $\dfrac{\log(\text{time (in seconds)})}{\log(\delta)}$},
]
\addplot[red, mark=x,only marks] table{logthm3.6};
\end{axis}
\end{tikzpicture}\caption{Experimental verification of the complexity bound in Theorem \ref{thm:complpfixed}\label{fig:mainthm}}
\end{figure}

\smallskip 

{\bf Experimental verification of Theorem \ref{thm:complpfixed}.} Figure \ref{fig:mainthm} shows the behavior of the logarithm of the complexity of the grevlex Gr\"obner basis computation with {\tt FGb} in terms of $\log(A)/\log(G)$. Theorem \ref{thm:complpfixed} states that $\log(timing)/\log(\delta)$ should be linear in $\log(A)/\log(G)$ which seems to be validated by experiments.
\begin{figure}
\centering
\begin{tikzpicture}[scale=1]
\pgfplotsset{every axis legend/.append style={at={(0.1,0)},anchor=south west}}
\begin{axis}[scale only axis,xmin=1,ymin=-5, ymax=6, xmax=11,axis y line*=left,xlabel={\normalsize $n$},
legend columns=3, ylabel={\normalsize $\dfrac{\log(\text{time (in seconds)})}{\log(D)}$},
]
\addplot[red, mark=x, thick,only marks] table{logcoro3.7};
\end{axis}
\end{tikzpicture}\caption{Experimental verification of the complexity bound in Corollary \ref{coro:complDbounded}\label{fig:verif}}
\end{figure}
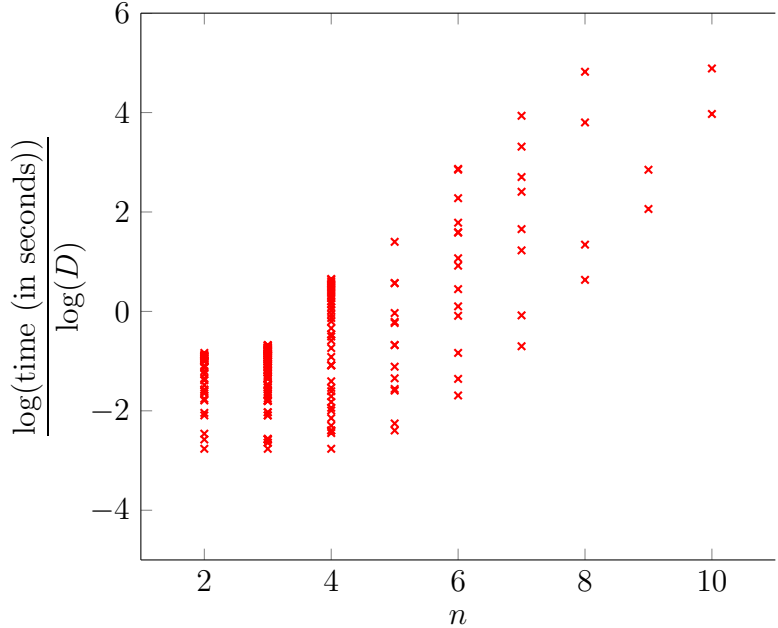

\smallskip 

{\bf Experimental verification of Corollary \ref{coro:complDbounded}.} Figure \ref{fig:verif} shows the behavior of the logarithm of the complexity in terms of the maximum of the degrees of the input system. The figure seems to indicate that $\log(timing)/\log(D)$ is linear in $n$, where $D=\max\{d_0,\ldots, d_p\}$. This provides experimental evidence of Corollary \ref{coro:complDbounded}, namely that the complexity is bounded above by $D^{O(n)}$.

\section{Proof of Theorem \ref{thm:complF4}}
\label{sec:proofcompl}
One method to bound the witness degree and the complexity of computing
Gr\"obner bases when the input polynomials are inhomogeneous is to
bound the degree of the polynomials in a grevlex Gr\"obner basis of
the homogenized system (by introducing a homogenization variable). In the sequel, we use the following
notations:

\begin{notation}
\begin{itemize}
\item The ring $\Q[U_{0,1},\ldots,U_{p,n},X_1,\ldots,X_n,H]$ with grading $\deg(U_{i,j})=d_i-1$, $\deg(X_i)=\deg(H)=1$ is denoted by $\Shom$;
\item for any polynomial $\frakfa\in\K[X]$, we let $\frakfhom\in\Shom$ denote its homogenization:
$$\frakfhom=H^{\deg(\frakfa)}\cdot\frakfa\left(\frac{X_1}H,\ldots,\frac{X_n}H\right)\in\Shom;$$
\item For $i\in\{0,\ldots,p\}, j\in\{1,\ldots,n\}$, let $\frakghom_{i n+j}\in\Shom $ denote the polynomial:
$$\frakghom_{i n+j}=
\begin{cases}
  U_{i,j}-\frac{\partial \frakfhom_i}{\partial X_j}\text{ if $i\in\{1,\ldots,p\}$},\\
  U_{i,j}-\frac{\partial \frakqhom}{\partial X_j}\text{ if $i=0$};
\end{cases}
$$
\item for $i\in\{1,\ldots,p\}$, we set $\frakghom_{(p+1)n+i}=\frakfhom_i$.
\end{itemize}
\end{notation}

The two following statements show that the algebraic structure of the ideal generated by the homogenized critical system is the same as the structure of the ideal generated by the homogeneous components of highest degree.

\begin{corollary}\label{coro:Hilberthomm} The following equality holds for a generic system $(q,f_1,\ldots, f_p)$ of degrees at most $(d_0,\ldots, d_p)$:
 $$\HS_{\K[X,H]/\Icrit(\frakqhom,\mathbf F^h)}(t)=\HS_{\K[X,H]/\Icrit(\frakq,\Fgenh)}(t).$$
\end{corollary}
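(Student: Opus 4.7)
The plan is to reduce the equality to a single regularity statement: that $H$ is a non-zero divisor modulo $\Icrit(\frakqhom, \mathbf{F}^h)$ in $\K[X, H]$. Once this is established, both sides of the identity reduce to $\HS_{\K[X]/\Icrit(\frakq, \Fgenh)}(t)/(1-t)$. The right-hand side is immediate because the generators of $\Icrit(\frakq, \Fgenh)$ do not involve $H$, so
$$\K[X,H]/\Icrit(\frakq, \Fgenh) \cong \bigl(\K[X]/\Icrit(\frakq, \Fgenh)\bigr) \otimes_\K \K[H].$$
The left-hand side is obtained similarly because multiplication by $(1-t)$ computes the Hilbert series of the quotient by $H$, and setting $H = 0$ specializes $\Icrit(\frakqhom, \mathbf{F}^h)$ to $\Icrit(\frakq, \Fgenh)$: each $f_i^h$ (and $q^h$) becomes $f_i^\infty$ (resp.\ $q^\infty$), so the entries of $\jac(q^h, \mathbf{F}^h)$ become those of $\jac(q^\infty, \mathbf{F}^\infty)$ and the same holds for their maximal minors.

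To prove the regularity of $H$, I would lift to $\Shom = \K[U, X, H]$ with the grading $\deg(U_{i,j}) = d_i-1$, $\deg(X_i) = \deg(H) = 1$, and follow the scheme of Lemma \ref{lem:nondivgl} and Corollary \ref{coro:HSImixed}. Set $g_{in+j}^h = U_{i,j} - \partial f_i^h/\partial X_j$ (with the obvious modification for $i = 0$) and $g_{(p+1)n+i}^h = f_i^h$. The substitution $U_{i,j} \mapsto \partial f_i^h/\partial X_j$ yields an isomorphism
$$\Shom / \bigl(\Did + \langle g_1^h, \ldots, g_{p+n(p+1)}^h \rangle \bigr) \cong \K[X,H]/\Icrit(\frakqhom, \mathbf{F}^h),$$
so it suffices to show that the homogeneous sequence $(g_1^h, \ldots, g_{p+n(p+1)}^h, H)$ is regular in $\Shom/\Did$.

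The ring $\Shom/\Did \cong (\K[U]/\Did) \otimes_\K \K[X,H]$ is Cohen-Macaulay of Krull dimension $p(n+1) + n + 1$, which is exactly the length $p + n(p+1) + 1$ of the extended sequence. By Macaulay's unmixedness theorem (exactly as in Lemma \ref{lem:nondivgl}), regularity therefore reduces to checking that the full quotient has Krull dimension $0$. That quotient is isomorphic to $\K[X,H]/\bigl(\Icrit(\frakqhom, \mathbf{F}^h) + \langle H \rangle\bigr) \cong \K[X]/\Icrit(\frakq, \Fgenh)$, and since genericity of $(q, \mathbf{F})$ implies genericity of the top-degree parts $(q^\infty, \mathbf{F}^\infty)$, Lemma \ref{lem:dim0} supplies dimension $0$. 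A homogeneous system of parameters in a graded Cohen-Macaulay algebra is a regular sequence in every order, so $H$ is a non-zero divisor modulo $\Icrit(\frakqhom, \mathbf{F}^h)$, and the identity of Hilbert series follows.

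The main obstacle is the dimension count for the full quotient; once this is reduced to Lemma \ref{lem:dim0} by specialization at $H = 0$, the Cohen-Macaulay formalism developed in Section \ref{sec:prelim} carries the rest routinely.
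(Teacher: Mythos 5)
Your proof is correct and hits the same key lemmas as the paper's — the lift to $\Shom=\Q[U,X,H]$, Cohen-Macaulayness of $\Shom/\Did$, Macaulay's unmixedness theorem, and Lemma \ref{lem:dim0} as the dimension-zero input — but it packages the conclusion differently. You append $H$ to the sequence so that its length matches $\dim(\Shom/\Did)=p+n(p+1)+1$; unmixedness then makes the full sequence $(g^h_1,\ldots,g^h_{p+n(p+1)},H)$ regular, and reading off the final step shows $H$ is a non-zero divisor modulo $\Icrit(\frakqhom,\F^h)$. The Hilbert-series identity is then immediate: both sides are forced to equal $\HS_{\K[X]/\Icrit(\frakq,\Fgenh)}(t)/(1-t)$, the right side via the tensor decomposition $\K[X,H]/\Icrit(\frakq,\Fgenh)\cong(\K[X]/\Icrit(\frakq,\Fgenh))\otimes_\K\K[H]$ and the left side because $H$ is regular and specializes $\Icrit(\frakqhom,\F^h)$ to $\Icrit(\frakq,\Fgenh)$. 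The paper instead stops the regular sequence at the $g^h$'s, writes out the explicit weighted Hilbert series of $\K[X,H]/\Icrit(\frakqhom,\F^h)$ from the regular-sequence factorization, and separately observes that the same formula holds for $\K[X,H]/\Icrit(\frakq,\Fgenh)$. Your route is a modest streamlining: making the non-zero-divisor property of $H$ explicit yields the equality in one stroke and clarifies what is really at stake (that the homogenized ideal is saturated with respect to $H$), at no extra cost since the genericity and dimension hypotheses are identical. One small remark: you do not actually need the permutability of regular sequences, since $H$ already sits at the end of the sequence you build.
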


\begin{proof}
First, note that the rings $\K[X]/\Icrit(\frakq,\Fgenh)$ and $\K[X,H]/(\Icrit(\frakqhom,\mathbf F^h)+\langle H \rangle)$ are isomorphic. 
Consequently, they also share the same Krull dimension, which is $0$ by Lemma \ref{lem:dim0}. Therefore, $\K[X,H]/\Icrit(\frakqhom,\mathbf F^h)$ has dimension at most $1$ since quotienting by $H$ can only decrease the dimension by one. Since $\K[X,H]/\Icrit(\frakqhom,\mathbf F^h)$ is isomorphic to $\Q[U,X,H]/(\Did+\langle g_1^h,\ldots,g_{p+n(p+1)}^h\rangle)$, this latter ring has also Krull dimension at most $1$. Next, note that $\Q[U,X,H]$ is a polynomial ring in $n+n(p+1)+1$ variables. Consequently, by Macaulay's Unmixedness Theorem and similarly to the proof of Lemma \ref{lem:nondivgl}, $g_1^h,\ldots,g_{p+n(p+1)}^h$ is a regular sequence in $\Q[U,X,H]/\Did$. Therefore, the Hilbert series of $\K[X,H]/(\Icrit(\frakqhom,\mathbf F^h)$ is
$$\wHS_{\K[X,H]/(\Icrit(\frakqhom,\mathbf F^h)}(t)=\wHS_{\Q[U,H]/\Did}(t)\cdot\frac{(1-t^{d_{0}-1})^n\prod_{1\leq i\leq p} (1-t^{d_i})(1-t^{d_i-1})^n}{(1-t)^n}.$$
Finally, by the same proof as Lemma \ref{coro:HSImixed}, we obtain for $\K[X,H]/\Icrit(\frakq,\Fgenh)$
$$\wHS_{\K[X,H]/\Icrit(\frakq,\Fgenh)}(t)=\wHS_{\Q[U,H]/\Did}(t)\cdot\frac{(1-t^{d_{0}-1})^n\prod_{1\leq i\leq p} (1-t^{d_i})(1-t^{d_i-1})^n}{(1-t)^n},$$
which concludes the proof.
\end{proof}

The next statement relates the degree of regularity of $\Icrit(\frakq,\Fgenh)$ with the maximal degree in the reduced grevlex Gr\"obner basis of the homogenized system $\Icrit(\frakqhom,\mathbf F^h)$.

\begin{corollary}\label{coro:dmaxdreg}
  For any homogeneous ideal $I\subset S$, let $\dmax_S(I)$ denote the maximal degree of a polynomial in the reduced grevlex Gr\"obner basis of $I$. If $q,f_1,\ldots, f_p$ are generic polynomials, then 
$$\dmax_{\Shom}(\Icrit(\frakqhom,\mathbf F^h))=\dmax_S(\Icrit(\frakq,\Fgenh))=\dreg(\Icrit(\frakq,\Fgenh)).$$
\end{corollary}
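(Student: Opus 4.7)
The statement decomposes as two equalities; I would handle the second one first.

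For $\dmax_S(\Icrit(\frakq,\Fgenh))=\dreg(\Icrit(\frakq,\Fgenh))$, note that $\dmax_S\le\dreg$ holds for every $0$-dimensional homogeneous ideal: if a minimal generator $m$ of $\LM(\Icrit(\frakq,\Fgenh))$ had degree $d>\dreg$, each divisor $m/X_i$ would have degree $\ge\dreg$ and hence would already lie in $\LM(\Icrit(\frakq,\Fgenh))$, violating minimality of $m$. For the reverse inequality, I would use that genericity of $(\frakq,\Fgenh)$ places the ideal in Noether position with respect to $X_n$, so that $X_n^{\dreg-1}$ is a standard monomial while $X_n^{\dreg}$ is a minimal generator of the initial ideal; equivalently, the generic initial ideal with respect to grevlex is Borel-fixed, and Bayer--Stillman's theorem identifies its maximum minimal-generator degree with its Castelnuovo--Mumford regularity, which for a $0$-dimensional ideal coincides with $\dreg$.

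The first equality reduces to establishing that $H$ is a non-zero-divisor on $\Shom/\Icrit(\frakqhom,\mathbf F^h)$. Corollary~\ref{coro:Hilberthomm} gives $\HS_{\K[X,H]/\Icrit(\frakqhom,\mathbf F^h)}(t)=\HS_{\K[X,H]/\Icrit(\frakq,\Fgenh)}(t)$, and the right-hand side factors as $\HS_{\K[X]/\Icrit(\frakq,\Fgenh)}(t)\cdot(1-t)^{-1}$ via the tensor-product decomposition $\K[X,H]/\Icrit(\frakq,\Fgenh)\cong\K[X]/\Icrit(\frakq,\Fgenh)\otimes_{\K}\K[H]$. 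Combined with the isomorphism $\K[X,H]/(\Icrit(\frakqhom,\mathbf F^h)+\langle H\rangle)\cong\K[X]/\Icrit(\frakq,\Fgenh)$ recalled in the proof of Corollary~\ref{coro:Hilberthomm}, this yields the Hilbert series identity $\HS_{\K[X,H]/(\Icrit(\frakqhom,\mathbf F^h)+\langle H\rangle)}(t)=(1-t)\,\HS_{\K[X,H]/\Icrit(\frakqhom,\mathbf F^h)}(t)$. Computing Hilbert series along the four-term exact sequence induced by multiplication by $H$ then forces the kernel of that map to vanish, so $H$ is indeed a non-zero-divisor.

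Next, I would fix a grevlex order in which $H$ is the smallest variable and argue that no leading monomial of the reduced Gr\"obner basis of $\Icrit(\frakqhom,\mathbf F^h)$ is divisible by $H$. If some $g$ in the reduced basis satisfied $H\mid\LM(g)$, then since $g$ is homogeneous and grevlex selects the term of smallest $H$-exponent among those of maximal total degree, every monomial of $g$ would be divisible by $H$; writing $g=H\,g'$, the non-zero-divisor property yields $g'\in\Icrit(\frakqhom,\mathbf F^h)$ with $\LM(g')$ properly dividing $\LM(g)$, contradicting minimality of the reduced basis. Therefore $\deg(g|_{H=0})=\deg(g)$ for every element $g$ of the reduced basis $G$, and a routine verification shows that $\{g|_{H=0}\}_{g\in G}$ is the reduced grevlex Gr\"obner basis of $\Icrit(\frakq,\Fgenh)$, yielding the desired equality of maximal degrees.

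The main technical obstacle is translating the Hilbert series identity of Corollary~\ref{coro:Hilberthomm} into the non-zero-divisor property and carefully checking that dehomogenization preserves minimality and reducedness of the grevlex Gr\"obner basis; the remainder of the argument is bookkeeping.
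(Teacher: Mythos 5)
Your handling of the first equality $\dmax_{\Shom}(\Icrit(\frakqhom,\mathbf F^h))=\dmax_S(\Icrit(\frakq,\Fgenh))$ is correct and takes a genuinely different route from the paper. The paper establishes the containment $\LM(\Icrit(\frakq,\Fgenh))\subset\LM(\Icrit(\frakqhom,\mathbf F^h))$ (both ideals viewed in $\Shom$, exploiting that $H$ is last in grevlex so leading monomials of $H$-free polynomials survive the substitution $H\mapsto 0$), upgrades it to an equality of initial ideals by comparing Hilbert series via Corollary~\ref{coro:Hilberthomm}, and then reads off the conclusion because reduced Gr\"obner basis degrees are the minimal-generator degrees of the initial ideal. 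You instead extract from the very same Hilbert series identity that $H$ is a nonzerodivisor on $\Shom/\Icrit(\frakqhom,\mathbf F^h)$, deduce that no leading monomial in the reduced basis can be divisible by $H$, and dehomogenize the basis. Both routes hinge on Corollary~\ref{coro:Hilberthomm}; yours goes through the regular-element mechanism, which is the standard way to relate Gr\"obner bases of $J$ and $J+\langle H\rangle$ and makes the role of $H$ being regular explicit, whereas the paper works entirely at the level of initial ideals and leaves that regularity implicit. The two arguments are of comparable length and both are sound.

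The second equality deserves more care, and here both you and the paper are terser than the claim warrants. Your argument for $\dmax_S\le\dreg$ (a minimal generator of $\LM(I)$ of degree $>\dreg$ would have a proper divisor of degree $\ge\dreg$ already in $\LM(I)$) is correct, and this is the only direction that Theorem~\ref{thm:complF4} actually uses, since there one needs $\dwit\le\dmax_{\Shom}\le\dreg$. The reverse inequality $\dmax_S\ge\dreg$ is \emph{not} a formal consequence of the definition of $\dreg$, as the paper suggests: for $I=\langle X^2,Y^2\rangle\subset\Q[X,Y]$ one has $\dreg(I)=3$ while the reduced Gr\"obner basis in any term order is $\{X^2,Y^2\}$, so $\dmax=2$. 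Some genericity input is genuinely required. Your sketch via Bayer--Stillman and Borel-fixedness of the generic initial ideal is the right shape of argument, but it is not automatic that genericity of the \emph{coefficients} of $(\frakq,\Fgenh)$ places $\Icrit(\frakq,\Fgenh)$ in generic coordinates for grevlex: the coefficients of $\Icrit(\frakq,\Fgenh)$ are highly structured functions of $(\frakq,\Fgenh)$, and the $GL_n$-orbit of a fixed system does not sweep out a dense subset of the space of systems, so one cannot simply absorb a generic linear change of variables into the genericity assumption. What one really needs is the concrete claim that $X_n^{\dreg-1}\notin\Icrit(\frakq,\Fgenh)$ for generic $(\frakq,\Fgenh)$, which then forces $X_n^{\dreg}$ to be a minimal generator of the initial ideal; this is plausible but requires a dedicated argument. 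This gap is shared with the paper and is harmless for the complexity bounds, but if you want the stated equality you should either supply that argument or weaken the conclusion to $\dmax_S(\Icrit(\frakq,\Fgenh))\le\dreg(\Icrit(\frakq,\Fgenh))$, which is all that is used downstream.
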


\begin{proof}
  For $f$ in $S$ or $\Shom$, let $\LM(f)$ denote its leading monomial
  with respect to the grevlex ordering with $h\prec X_n\prec\dots\prec
  X_1$.  For any polynomial $f^h\in \Shom$ not divisible by $h$,
  $\LM(f^h(X_1,\ldots, X_n,0))=\LM(f^h(X_1,\ldots,
  X_n,H))$. Consequently
  $\LM(\Icrit(\frakq,\Fgenh))\subset\LM(\Icrit(\frakqhom,\mathbf
  F^h))$. By Corollary \ref{coro:Hilberthomm} and since for any
  homogeneous ideal $I$, $\wHS_{S/I}=\wHS_{S/\LM(I)}$, we obtain
  $\LM(\Icrit(\frakqhom,\mathbf F^h))=\LM(\Icrit(\frakq,\Fgenh))$. The
  degrees of the polynomials in the reduced Gr\"obner basis of a
  homogeneous ideal $I$ equal the degrees of a minimal set of
  generators of $\LM(I)$. Consequently,
  $\dmax_{\Shom}(\Icrit(\frakqhom,\mathbf
  F^h))=\dmax_S(\Icrit(\frakq,\Fgenh))$. The second equality
  $\dmax_S(\Icrit(\frakq,\Fgenh))=\dreg(\Icrit(\frakq,\Fgenh))$ is a
  consequence of the definition of the degree of regularity (see
  Section~\ref{sec:notations}).
\end{proof}

We can now conclude the proof of the complexity of the grevlex Gr\"obner basis computation:
\begin{proof}[of Theorem \ref{thm:complF4}]
Recall that the witness degree $\dwit$ is defined as the smallest
integer $d$ such that the $\Q$-vector space
$$T_d=\left\{\sum f_i \alpha_i +\sum m_j \beta_j \mid \alpha_i,\beta_j\in
  \Q[X_1,\ldots, X_n], \deg(f_i \alpha_i)\leq d,\deg(m_j \beta_j)\leq
  d\right\}$$ contains the reduced grevlex Gr\"obner basis of
$\Icrit(q,\F)$.  It is known that dehomogenizing a grevlex Gr\"obner
basis of $\Icrit(q^h,\F^h)$ yields a Gr\"obner basis of
$\Icrit(q,\F)$.  A consequence of this fact is that
$\dwit\leq\dmax_{\Shom}(\Icrit(\frakqhom,\mathbf
F^h))$ (see \emph{e.g.}
\cite{Laz83,BarFauSal04}\cite[Proposition 16]{FauSafSpa11a}). By Corollary \ref{coro:dmaxdreg}, $\dmax_{\Shom}(\Icrit(\frakqhom,\mathbf
F^h))=\dreg(\Icrit(\frakq,\Fgenh))$

Now the goal is to compute a triangular basis of the $\Q$-vector space
$T_{\dreg(\Icrit(q^\infty,\F^\infty))}$.  This vector space is equal to the
row span of the \emph{Macaulay matrix} in degree
$\dreg(\Icrit(q^\infty,\F^\infty))$, which is constructed as follows.
The rows of the Macaulay matrix are indiced by all products $\mu\cdot
k_\ell$, where $k_\ell$ is a polynomial of the system generating
$\Icrit(q,\F)$ (\emph{i.e.} either an input polynomial $f_i$ or a maximal
minor of $\jac(q,\F)$) and $\mu$ ranges through all monomials
of degree at most $\dreg(\Icrit(q^\infty,\F^\infty))-\deg(k_\ell)$.
The columns of this matrix are indiced by all the monomials of degree
at most $\dreg(\Icrit(q^\infty,\F^\infty))$. The entries of a row of
the matrix are the coefficients of the corresponding polynomial
$\mu\cdot k_\ell$. The number of rows (resp. columns) of
the Macaulay matrix is bounded above by
$\left(n+\binom{n}{p+1}\right)\binom{n+\dreg(\Icrit(q^\infty,\F^\infty))}{n}$
(resp. $\binom{n+\dreg(\Icrit(q^\infty,\F^\infty))}{n}$).  Since the
row echelon form of a $A\times B$ matrix can be computed within $O(A B
\min(A,B)^{\omega-2})$ operations \cite[Prop.~2.11]{Sto00}, (where
$\omega$ is a feasible exponent for matrix multiplication), this
computation can be performed within
$$\displaystyle O\left(\left(n+\binom{n}{p+1}\right)\binom{n+\dreg(\Icrit(q^\infty,\F^\infty))}{n}^\omega\right)$$ arithmetic operations in $\Q$.
The polynomials corresponding to the rows of the reduced Macaulay
matrix yield a grevlex Gr\"obner basis of $\Icrit(q,\F)$.
\end{proof}

\bibliographystyle{abbrv}
\bibliography{biblioPtscritmixed}

\begin{thebibliography}{10}

\bibitem{BGHM2}
B.~Bank, M.~Giusti, J.~Heintz, and G.-M. Mbakop.
\newblock Polar varieties and efficient real equation solving: the hypersurface
  case.
\newblock {\em Journal of Complexity}, 13(1):5--27, 1997.

\bibitem{BGHM1}
B.~Bank, M.~Giusti, J.~Heintz, and G.-M. Mbakop.
\newblock Polar varieties and efficient real elimination.
\newblock {\em Mathematische Zeitschrift}, 238(1):115--144, 2001.

\bibitem{BanGiuHeiPar05}
B.~Bank, M.~Giusti, J.~Heintz, and L.~Pardo.
\newblock Generalized polar varieties: Geometry and algorithms.
\newblock {\em Journal of Complexity}, 21(4):377--412, 2005.

\bibitem{bank2009bipolar}
B.~Bank, M.~Giusti, J.~Heintz, and L.~Pardo.
\newblock Bipolar varieties and real solving of a singular polynomial equation.
\newblock {\em Jaen Journal on Approximation}, 2(1):65--77, 2010.

\bibitem{BGHM4}
B.~Bank, M.~Giusti, J.~Heintz, and L.-M. Pardo.
\newblock Generalized polar varieties and efficient real elimination procedure.
\newblock {\em Kybernetika}, 40(5):519--550, 2004.

\bibitem{BGHM3}
B.~Bank, M.~Giusti, J.~Heintz, and L.-M. Pardo.
\newblock Generalized polar varieties: Geometry and algorithms.
\newblock {\em Journal of Complexity}, 21(4):377--412, 2005.

\bibitem{BanGiuHeiMoh13}
B.~Bank, M.~Giusti, J.~Heintz, and M.~{Safey El Din}.
\newblock Intrinsic complexity estimates in polynomial optimization.
\newblock {\em arXiv}, abs/1304.5214, 2013.

\bibitem{BanGiuHeiSafSch10}
B.~Bank, M.~Giusti, J.~Heintz, M.~{Safey El Din}, and E.~Schost.
\newblock On the geometry of polar varieties.
\newblock {\em Applicable Algebra in Engineering, Communication and Computing},
  21(1):33--83, 2010.

\bibitem{BarFauSal04}
M.~Bardet, J.-C. Faug\`ere, and B.~Salvy.
\newblock On the complexity of {G}r\"obner basis computation of semi-regular
  overdetermined algebraic equations.
\newblock In {\em Proceedings of the International Conference on Polynomial
  System Solving (ISCPP)}, pages 71--74, 2004.

\bibitem{BarFauSalSpa11}
M.~Bardet, J.-C. Faug{\`e}re, B.~Salvy, and P.-J. Spaenlehauer.
\newblock On the complexity of solving quadratic boolean systems.
\newblock {\em Journal of Complexity}, 2012.

\bibitem{barvinok1993}
A.~Barvinok.
\newblock Feasibility testing for systems of real quadratic equations.
\newblock {\em Discrete \& Computational Geometry}, 10(1):1--13, 1993.

\bibitem{BasPolRoy98}
S.~Basu, R.~Pollack, and M.-F. Roy.
\newblock A new algorithm to find a point in every cell defined by a family of
  polynomials.
\newblock In {\em Quantifier elimination and cylindrical algebraic
  decomposition}. Springer-Verlag, 1998.

\bibitem{BruVet88}
W.~Bruns and U.~Vetter.
\newblock {\em Determinantal Rings}.
\newblock Springer, 1988.

\bibitem{Buc65}
B.~Buchberger.
\newblock {\em An algorithm for finding the basis elements of the residue class
  ring of a zero dimensional polynomial ideal}.
\newblock PhD thesis, {U}niversity of {I}nnsbruck, 1965.

\bibitem{budur2004hilbert}
N.~Budur, M.~Casanellas, and E.~Gorla.
\newblock Hilbert functions of irreducible arithmetically {G}orenstein schemes.
\newblock {\em Journal of Algebra}, 272(1):292--310, 2004.

\bibitem{Can88}
J.~Canny.
\newblock {\em Complexity of Robot Motion Planning}.
\newblock PhD thesis, Massachusetts Institute of Technology, 1988.

\bibitem{catanese2006maximum}
F.~Catanese, S.~Ho{\c{s}}ten, A.~Khetan, and B.~Sturmfels.
\newblock The maximum likelihood degree.
\newblock {\em American Journal of Mathematics}, pages 671--697, 2006.

\bibitem{DraHorOttStuTho13}
J.~{Draisma}, E.~{Horobet}, G.~{Ottaviani}, B.~{Sturmfels}, and R.~R. {Thomas}.
\newblock {The Euclidean distance degree of an algebraic variety}.
\newblock {\em ArXiv e-prints}, 2013.

\bibitem{eagon1962ideals}
J.~Eagon and D.~Northcott.
\newblock Ideals defined by matrices and a certain complex associated with
  them.
\newblock {\em Proceedings of the Royal Society of London. Series A.
  Mathematical and Physical Sciences}, 269(1337):188--204, 1962.

\bibitem{Eis95}
D.~Eisenbud.
\newblock {\em Commutative Algebra with a View Toward Algebraic Geometry}.
\newblock Springer, 1995.

\bibitem{Eis01}
D.~Eisenbud.
\newblock {\em The geometry of syzygies}.
\newblock Springer Verlag, 2005.

\bibitem{Fau99}
J.-C. Faug\`ere.
\newblock A new efficient algorithm for computing {G}r\"obner bases ({F4}).
\newblock {\em Journal of Pure and Applied Algebra}, 139(1--3):61--88, 1999.

\bibitem{Fau02}
J.-C. Faug\`ere.
\newblock A new efficient algorithm for computing {G}r\"obner bases without
  reductions to zero ({F5}).
\newblock In {\em Proceedings of the 2002 International Symposium on Symbolic
  and Algebraic Computation (ISSAC)}, pages 75--83. ACM, 2002.

\bibitem{FauGiaLazMor93}
J.-C. Faug\`ere, P.~Gianni, D.~Lazard, and T.~Mora.
\newblock Efficient computation of zero-dimensional {G}r\"obner bases by change
  of ordering.
\newblock {\em Journal of Symbolic Computation}, 16(4):329--344, 1993.

\bibitem{FauLevPer08}
J.-C. Faug\`ere, F.~L\'evy{-}dit{-}Vehel, and L.~Perret.
\newblock Cryptanalysis of {M}in{R}ank.
\newblock In {\em Advances in Cryptology - CRYPTO 2008}, volume 5157 of {\em
  LNCS}, pages 280--296. Springer, 2008.

\bibitem{FauSafSpa10a}
J.-C. Faug\`ere, M.~{Safey El Din}, and P.-J. Spaenlehauer.
\newblock Computing loci of rank defects of linear matrices using {G}r\"obner
  bases and applications to cryptology.
\newblock In S.~M. Watt, editor, {\em Proceedings of the 2010 International
  Symposium on Symbolic and Algebraic Computation (ISSAC 2010)}, pages
  257--264, 2010.

\bibitem{FauSafSpa12}
J.-C. Faug\`ere, M.~{Safey El Din}, and P.-J. Spaenlehauer.
\newblock Critical points and {G}r\"obner bases: the unmixed case.
\newblock In {\em Proceedings of the 2012 International Symposium on Symbolic
  and Algebraic Computation (ISSAC 2012)}, pages 162--169, 2012.

\bibitem{FauSafSpa11a}
J.-C. Faug{\`e}re, M.~{Safey El Din}, and P.-J. Spaenlehauer.
\newblock On the complexity of the {G}eneralized {M}in{R}ank {P}roblem.
\newblock {\em Journal of Symbolic Computation}, 55:30--58, 2013.

\bibitem{GiuLecSal01}
M.~Giusti, G.~Lecerf, and B.~Salvy.
\newblock A {G}r{\"o}bner free alternative for polynomial system solving.
\newblock {\em Journal of Complexity}, 17(1):154--211, 2001.

\bibitem{GriPas05}
D.~Grigoriev and D.~Pasechnik.
\newblock Polynomial-time computing over quadratic maps i: sampling in real
  algebraic sets.
\newblock {\em Computational Complexity}, 14(1):20--52, Apr. 2005.

\bibitem{GriVor88}
D.~Grigoriev and N.~Vorobjov.
\newblock Solving systems of polynomials inequalities in subexponential time.
\newblock {\em Journal of Symbolic Computation}, 5:37--64, 1988.

\bibitem{hosten2005solving}
S.~Hosten, A.~Khetan, and B.~Sturmfels.
\newblock Solving the likelihood equations.
\newblock {\em Foundations of Computational Mathematics}, 5(4):389--407, 2005.

\bibitem{KipSha99}
A.~Kipnis and A.~Shamir.
\newblock Cryptanalysis of the {HFE} public key cryptosystem by
  relinearization.
\newblock In {\em Advances in Cryptology - CRYPTO' 99}, volume 1666 of {\em
  LNCS}, pages 19--30. Springer, 1999.

\bibitem{KnuMil05}
A.~Knutson and E.~Miller.
\newblock Gr{\"o}bner geometry of {S}chubert polynomials.
\newblock {\em Annals of Mathematics}, pages 1245--1318, 2005.

\bibitem{LasSch82}
A.~Lascoux and M.-P. Sch{\"u}tzenberger.
\newblock Structure de {H}opf de l'anneau de cohomologie et de l'anneau de
  {G}rothendieck d'une vari{\'e}t{\'e} de drapeaux.
\newblock {\em C.R. Acad\'emie des Sciences de Paris}, 295:629--633, 1982.

\bibitem{Laz83}
D.~Lazard.
\newblock {G}r\"obner bases, {G}aussian elimination and resolution of systems
  of algebraic equations.
\newblock In {\em Computer Algebra, EUROCAL'83}, volume 162 of {\em LNCS},
  pages 146--156. Springer, 1983.

\bibitem{MilStu05}
E.~Miller and B.~Sturmfels.
\newblock {\em Combinatorial commutative algebra}, volume 227.
\newblock Springer Verlag, 2005.

\bibitem{NieRan09}
J.~Nie and K.~Ranestad.
\newblock Algebraic degree of polynomial optimization.
\newblock {\em SIAM Journal on Optimization}, 20(1):485--502, 2009.

\bibitem{pachter2005algebraic}
L.~Pachter and B.~Sturmfels.
\newblock {\em Algebraic statistics for computational biology}, volume~13.
\newblock Cambridge University Press, 2005.

\bibitem{room1938geometry}
T.~Room.
\newblock {\em The geometry of determinantal loci}, volume~1.
\newblock The University Press, 1938.

\bibitem{Rou99}
F.~Rouillier.
\newblock Solving zero-dimensional systems through the rational univariate
  representation.
\newblock {\em Applicable Algebra in Engineering, Communication and Computing},
  9(5):433--461, 1999.

\bibitem{SafSch03}
M.~{Safey El Din} and {\'E}.~Schost.
\newblock Polar varieties and computation of one point in each connected
  component of a smooth real algebraic set.
\newblock In {\em Proceedings of the 2003 International Symposium on Symbolic
  and Algebraic Computation}, pages 224--231, 2003.

\bibitem{DBLP:journals/corr/DinS13}
M.~{Safey El Din} and {\'E}.~Schost.
\newblock A nearly optimal algorithm for deciding connectivity queries in
  smooth and bounded real algebraic sets.
\newblock {\em arXiv}, abs/1307.7836, 2013.

\bibitem{SafTre06}
M.~{Safey El Din} and P.~Tr\'ebuchet.
\newblock Strong bi-homogeneous {B}{\'e}zout theorem and its use in effective
  real algebraic geometry.
\newblock {\em Arxiv preprint cs/0610051}, 2006.

\bibitem{Sto00}
A.~Storjohann.
\newblock {\em Algorithms for Matrix Canonical Forms}.
\newblock PhD thesis, {U}niversity of {W}aterloo, 2000.

\bibitem{trang1981varietes}
L.~Tr{\'a}ng and B.~Teissier.
\newblock Vari{\'e}t{\'e}s polaires locales et classes de {C}hern des
  vari{\'e}t{\'e}s singulieres.
\newblock {\em Annals of Mathematics}, 114(3):457--491, 1981.

\bibitem{Vas11}
V.~V. Williams.
\newblock Multiplying matrices faster than {C}oppersmith-{W}inograd.
\newblock In {\em Proceedings of the 44th Symposium on Theory of Computing},
  pages 887--898. ACM, 2012.

\end{thebibliography}

\end{document}